\documentclass{article}

\usepackage[utf8]{inputenc} % allow utf-8 input
\usepackage[T1]{fontenc}    % use 8-bit T1 fonts
\usepackage{hyperref}       % hyperlinks
\usepackage{url}            % simple URL typesetting
\usepackage{booktabs}       % professional-quality tables
\usepackage{amsfonts}       % blackboard math symbols
\usepackage{nicefrac}       % compact symbols for 1/2, etc.
\usepackage{microtype}      % microtypography
\usepackage{lipsum}
\usepackage{fancyhdr}       % header
\usepackage{graphicx}       % graphics
%\graphicspath{{media/}}     % organize your images and other figures under media/ folder
\usepackage{tikz}
\usepackage{amsthm}
\usepackage{xcolor}
\newtheorem{theorem}{Theorem}
\newtheorem{lemma}{Lemma}
\newtheorem{corollary}{Corollary}
\newtheorem{observation}{Observation}
\newtheorem{definition}{Definition}

%%%%%%%%%%%%%%%%%%%%%%%%%%%%%%%%%%%%%%%%%%

%Header
\pagestyle{fancy}
\thispagestyle{empty}
\rhead{ \textit{ }} 

% Update your Headers here
\fancyhead[LO]{Neighborhood star-free graphs}
% \fancyhead[RE]{Firstauthor and Secondauthor} % Firstauthor et al. if more than 2 - must use \documentclass[twoside]{article}

\author{
  \textbf{Vinicius L. do Forte}\\
  Universidade Federal Rural do Rio de Janeiro\\ Rio de Janeiro Brazil\\
  \texttt{vlforte@gmail.com} \\
  \and
   \textbf{Min Chih Lin} \\
  Instituto de Cálculo and Departamento de Computación\\ Universidad de Buenos Aires\\
  Buenos Aires, Argentina\\
  \texttt{oscarlin@dc.uba.ar} \\
  \and
  \textbf{Abilio Lucena} \\
  Programa de Engenharia de Sistemas e Computação\\
  Universidade Federal do Rio de Janeiro\\
  Rio de Janeiro, Brazil\\
  \texttt{abiliolucena@gmail.com} \\
  \and
  \textbf{Nelson Maculan} \\
  Programa de Engenharia de Sistemas e Computação\\ Universidade Federal do Rio de Janeiro\\ Rio de Janeiro, Brazil\\
  \texttt{maculan@cos.ufrj.br} \\
  \and 
 \textbf{Veronica A. Moyano} \\
  Instituto de Ciencias\\ Universidad Nacional de General Sarmiento\\ Buenos Aires, Argentina\\
  \texttt{vmoyano@ic.fcen.uba.ar} \\
  \and 
  \textbf{Jayme L. Szwarcfiter} \\
  Programa de Engenharia de Sistemas e Computação\\ Universidade Federal do Rio de Janeiro\\ Rio de Janeiro, Brazil \\ Instituto de Matemática e Estatística\\ Universidade do Estado do Rio de Janeiro\\ Rio de Janeiro, Brazil\\}
\begin{document}
%% Title
\title{Graphs whose vertices of degree at least 2 lie in a triangle}

\maketitle

%\begin{center}
  %\textbf{Vinicius L. do Forte}\\
  %Universidade Federal Rural do Rio de Janeiro, Brazil\\
  %\texttt{vlforte@gmail.com} \\
  %
   %\textbf{Min Chih Lin} \\
  %Instituto de Cálculo and Departamento de Computación, Universidad de Buenos Aires, Buenos Aires, Argentina\\
  %\texttt{oscarlin@dc.uba.ar} \\
  %
  %\textbf{Abilio Lucena} \\
  %Programa de Engenharia de Sistemas e Computação, Universidade Federal do Rio de Janeiro, Rio de Janeiro, Brazil\\
  %\texttt{abiliolucena@gmail.com} \\
  %
  %\textbf{Nelson Maculan} \\
  %Programa de Engenharia de Sistemas e Computação, Universidade Federal do Rio de Janeiro, Rio de Janeiro, Brazil\\
  %\texttt{maculan@cos.ufrj.br} \\
   %
 %\textbf{Veronica A. Moyano} \\
  %Instituto de Ciencias, Universidad Nacional de General Sarmiento, Buenos Aires, Argentina\\
  %\texttt{vmoyano@ic.fcen.uba.ar} \\
   %
  %\textbf{Jayme L. Szwarcfiter} \\
  %Programa de Engenharia de Sistemas e Computação, Universidade Federal do Rio de Janeiro, Rio de Janeiro, Brazil and Instituto de Matemática e Estatística, Universidade do Estado do Rio de Janeiro, Rio de Janeiro, Brazil\\
  %\texttt{jayme@cos.ufrj.br}
%\end{center}

\begin{abstract}
A pendant vertex is one of degree one and an isolated vertex has degree zero. A neighborhood star-free (NSF for short) graph is one in which every vertex is contained in a triangle except pendant vertices and isolated vertices. This class has been considered before for several contexts. In the present paper, we study the complexity of the dominating induced matching (DIM) problem \textcolor{black}{and the perfect edge domination (PED) problem for NSF graphs. We prove the corresponding decision problems are NP-Complete for several of its subclasses. As an added value of this study, we have shown three connected variants of planar positive 1in3SAT are also NP-Complete. Since these variants are more basic in complexity theory context than many graph problems, these results can be useful to prove that other problems are NP-Complete.}
\end{abstract}

\textbf{keyword: }
Neighborhood star-free graphs \and Triangle \and Efficient edge domination \and Perfect edge domination, Dominating induced matching \and Connected planar positive 1in3SAT \and Connected cubic planar positive 1in3SAT \and Connected subcubic planar $C_4$-free positive 1in3SAT

%\keywords{neighborhood-star-free graphs \and triangle \and efficient edge domination \and perfect edge domination \and dominating induced matching \and crickets \and cubic planar monotone 1in3SAT \and planar $C_4$-free positive 1in3SAT}

\section{Introduction}
A graph with vertex set $V$ and edge set $E$ is denoted by $G=(V,E)$, and the number of vertices and edges is $|V|=n$ and $|E|=m$, respectively. The open and closed neighborhoods of $v \in V$ are denoted by $N(v)$ and $N[v]=N(v)\cup\{v\}$, respectively. Write $d(v)=|N(v)|$ for the degree of $v$. Let $e=(u,w)\in E$ an edge, denote by $d(e)=d(u,w)=|N(u)\cap N(w)|$ the degree of $e$. For $V' \subseteq V$, $G[V']$ represents the subgraph of $G$ induced by $V'$. A \textit{\textbf{star}} is the bipartite graph $K_{1,t}$, for some $t \geq 1$. 

We say that $G=(V,E)$ is a \textit{\textbf{neighborhood star-free}} graph, NSF graph for short, if for every vertex $v\in V$ with degree at least 2, $G[N[v]]$,  is not a star. In other words, every vertex $v\in V$ is contained in some triangle, for $d(v) \geq 2$. Such a class of graphs seems natural, in some different ways. For instance, to model human relationships, as the common case where every individual is acquainted with at least a pair of individuals, which are themselves acquainted.  In this paper, we explore the use of this class in the problems of perfect and efficient edge domination of graphs. We remark that NSF graphs are not necessarily closed under induced subgraphs \textcolor{black}{(adding 2 new vertices of degree $n+1$ to any $n$-vertex graph makes it an NSF graph)}. In this paper, we focus on NSF graphs and this class has been considered before on various contexts.

\textcolor{black}{
In \cite{DBLP:journals/dm/ChartrandHHZ03}, the authors investigate the $F$-domination number for all 2-stratified graphs $F$ of order at most 3. Particularly, for two 2-stratified graphs $K_3$, they consider (i) graphs in which every vertex is in a triangle and (ii) graphs in which every edge is in a triangle; both are NSF graph subclasses. A graph $G$ is \textit{\textbf{2-stratified graph}} if its vertex set is partitioned into two non-empty color classes: red and blue. Let $F$ be a 2-stratified graph rooted at some blue vertex $v$. The \textit{\textbf{$F$-domination number $\gamma_F(G)$}} of a graph $G$ is the minimum number of red vertices of $G$ in a red–blue coloring of
the vertices of $G$ such that every blue vertex $v$ of $G$ belongs to a copy of $F$ rooted at $v$.}

\textcolor{black}{In a very recent paper \cite{Brandenburg_2023}, it is shown that every vertex of any optimal IC-planar graph is in a triangle. An $n$-vertex graph $G$ is \textit{\textbf{IC-Planar}} if it has a drawing in the plane such that each edge is crossed by at most one edge and every vertex is incident to at most one crossed edge. Such IC-planar graph $G$ is \textit{\textbf{optimal}} if every $n$-vertex graph $G'$ with more edges than $G$ is non-IC-Planar.}

\textcolor{black}{Also, any $k$-tree for fixed $k\geq 2$ is an NSF graph. A complete graph on $k$ vertices is a $k$-tree, fixed $k$. If $G$ is a $k$-tree, fixed $k$ and $C$ is a $k$-clique of $G$, then the graph formed by adding a new vertex to $G$, and making it adjacent to all vertices of $C$ is a $k$-tree, fixed $k$. Several characterizations of $k$-trees are given in \cite{ROSE1974317}. As maximal outerplanar graphs are $2$-trees \cite{MARKENZON2006818}, they are NSF graphs.
}

Most of our proposed results on this paper are on \textcolor{black}{edge} domination problems. For a graph $G=(V,E)$, every edge $e \in E$ \textit{\textbf{dominates}} itself and all edges adjacent to $e$. A subset $E' \subseteq E$ is an \textit{\textbf{edge dominator}} of $G$, if $E'$  dominates every edge $e \in E$ of $G$. In particular, if each edge 
$e \in E$ is dominated exactly by one edge then $E'$ is called an \textit{\textbf{efficient edge dominating set (EED)}}. On the other hand, if we relax the definition, and let each of $e \in E \setminus E'$ be dominated exactly by one edge then $E'$ is called a \textit{\textbf{perfect edge dominating set (PED)}}. It follows from the definitions that every EED is also a PED. Observe that a graph does not necessarily contain an EED \textcolor{black}{and in case of having several EEDs, they all have the same number of edges}. However, every graph always contains a PED. In particular, the entire edge set $E$ is always a PED, called \textit{\textbf{trivial}}. Every PED which is not trivial and not an EED is called \textit{\textbf{proper}}. \textcolor{black}{As many domination problems, the interest is to find a PED (EED) of smaller size or weight if the graph has weights assigned to its edges. All these problem variants are NP-hard for graphs in general. In particular, determining whether or not a graph has an EED is NP-Complete\cite{Gr-Sl-Sh-Ho-1993}. Below is a summary table of the complexity status of this problem by different graph classes and their references. Also in \cite{CARDOSO2011521}, the authors define a limite class $\mathcal{S}=\bigcap \mathcal{S}_k$ (see Table \ref{tab:table1}) for EED problem and they claim that unless $P=NP$, the problem is solvable in polynomial time in an hereditary graph class $X=Free(M)$ only if $X$ excludes all classes $\mathcal{S}_k$, i.e., only if $M \cap \mathcal{S}_k\ne\emptyset$, for each $k$.
} 
%\textcolor{blue}{Agregar\'ia aqu\'i, las subclases que son NP-Completes y las que son polinomiales para cada problema con sus referencias. Tambi\'en hacer lo mismo con los resultados de dicotom\'ia. Finalmente, mencionar que todos estos estudios se centran casi exclusivamente en subclases de grafos que son hereditarias. Por eso es interesante estudiar esta clase no hereditaria.}
\begin{table}[h!]
  \begin{center}
    \label{tab:table1}
    \begin{tabular}{c|c|c} % <-- Alignments: 1st column left, 2nd middle and 3rd right, with vertical lines in between
      \textbf{Graph Class} & \textbf{Complexity} & \textbf{Ref.}\\
      \hline
			cubic graphs & NP-Complete & \cite{Kratochvil94}\\
      $r$-regular graphs for fixed $r\geq 3$ & NP-Complete & \cite{CardosoCDS08}\\
      bipartite graphs & NP-Complete & \cite{CHINLUNGLU1998203}\\
      planar bipartite graphs & NP-Complete & \cite{Lu-Ko-Tang-2002}\\
      planar bipartite graphs with degree $\leq 3$ & NP-Complete & \cite{Br-2010}\\
      $\mathcal{S}_k=(C_3,\cdots,C_k,H_1,\cdots,H_k)$-free bipartite graphs with degree $\leq 3$  & NP-Complete & \cite{CARDOSO2011521}\\
			circular-arc graphs & Linear & \cite{LIN2015}\\
			claw-free graphs & $O(n)$ & \cite{CARDOSO2011521,LIN2014524}\\
			convex graphs & Linear & \cite{CARDOSO2011521,Br-2010}\\
			biconvex graphs & $O(n)$ & \cite{Br-2010,LIN2014524}\\ 
			bounded clique-width graphs & Polynomial & \cite{CARDOSO2011521}\\
			bipartite permutation  graphs & $O(n)$ & \cite{CHINLUNGLU1998203,LIN2014524}\\ 
			generalized series–parallel graphs & Linear & \cite{Lu-Ko-Tang-2002}\\
			chordal graphs & $O(n)$ & \cite{Lu-Ko-Tang-2002,LIN2014524}\\
			dually chordal graphs & $O(n)$ & \cite{DBLP:conf/isaac/BrandstadtLR12,LIN2014524}\\ 
			hole-free graphs & Polynomial & \cite{Br-2010}\\
			chordal bipartite graphs & Linear & \cite{Br-2010}\\
			$P_7$-free graphs & Linear & \cite{Br-Mo-2014}\\
			$P_8$-free graphs & Polynomial & \cite{Br-Mo-2017}\\
			$P_9$-free graphs & Polynomial & \cite{Br-Mo-2022}\\
			$S_{1,2,2}$-free graphs & Polynomial & \cite{DBLP:journals/jda/KorpelainenLP14}\\
			$S_{1,2,3}$-free graphs & Polynomial & \cite{DBLP:journals/jda/KorpelainenLP14}\\
			$S_{1,2,4}$-free graphs & Polynomial & \cite{Br-Mo-2020}\\
			$S_{2,2,2}$-free graphs & Polynomial & \cite{He-Lo-Ri_Za-deW-2018}\\
			$S_{2,2,3}$-free graphs & Polynomial & \cite{Br-Mo-2020-2}\\
			$S_{1,1,5}$-free graphs & Polynomial & \cite{Br-Mo-2020-3}\\
    \end{tabular}
    \caption{Complexity Status for Existence of EED problem by graph classes}
  \end{center}
\end{table}

In contrast, there is less research done on the PED problem.

\begin{table}[h!]
  \begin{center}
    \label{tab:table2}
    \begin{tabular}{c|c|c} % <-- Alignments: 1st column left, 2nd middle and 3rd right, with vertical lines in between
      \textbf{Graph Class} & \textbf{Complexity} & \textbf{Ref.}\\
      \hline
      bipartite graphs & NP-Complete & \cite{Lu-Ko-Tang-2002}\\
      claw-free graphs with degree $\leq 3$ & NP-Complete & \cite{Li-Lo-Mo-Sz}\\
			$r$-regular graphs for fixed $r\geq 3$ & NP-Complete & \cite{Li-Lo-Mo-Sz}\\
			graphs with degree $\leq r$ and girth $\geq k$ for fixed $r,k\geq 3$ & NP-Complete & \cite{Li-Lo-Mo-Sz}\\
			generalized series–parallel graphs & Linear & \cite{Lu-Ko-Tang-2002}\\
			chordal graphs & Linear & \cite{Lu-Ko-Tang-2002}\\
 			circular-arc graphs & Linear & \cite{LIN2015}\\
      cubic claw-free graphs & $O(n)$ & \cite{Li-Lo-Mo-Sz}\\
			$P_5$-free graphs & Linear & \cite{Li-Lo-Mo-Sz}\\
    \end{tabular}
    \caption{Complexity Status for the PED problem by graph classes}
  \end{center}
\end{table}

Also, a dichotomy theorem is given in \cite{Li-Lo-Mo-Sz} which establishes the complexity of the PED problem for every graph class $\mathcal{G}(H,d)$ which is $H$-free graphs with degree $\leq d$ for some fixed $d\geq 3$ and $H$ a graph. The problem is polynomial time solvable for graphs in $\mathcal{G}(H,d)$ if $H$ is linear forest (union of induced paths) and NP-Complete otherwise.

We can see that most of the studies done on these problems so far focus their attention on the hereditary graph classes. It would be of great interest to study them for a non-hereditary graph class like NSF graphs.

In \cite{Li-Lo-Mo-Sz}, the authors proved that \textcolor{black}{connected} NSF graphs do not have any proper perfect dominating set. The only possible PEDs of these graphs are (i) the trivial PED, and (ii) EEDs. \textcolor{black}{We give a brief idea of this proof. Given a perfect edge dominating set $E'\subseteq E$ of a connected graph $G=(V,E)$, we can classify each vertex of $V$ in three types according to the number of its incident edges in $E'$: white, 0 edges; yellow, 1 edge; black, 2 or more edges. The following statements are true.
\begin{itemize}
\item Black vertices have no white neighbors.
\item If  $(u,v)\in E$, then $(u,v)\in E'$ if and only if $u$ and $v$ are both non-white vertices.
\item Any neighbor of a white vertex is yellow.
\item Any yellow vertex has exactly one non-white neighbor.
\item For any triangle $T$ of $G$, $T$ has either all 3 black vertices or 2 of them are yellow and the third one is white.
\item Given a black vertex $u$ and some of its neighbors $v$, if $v$ is in some triangle then $v$ is also black.
\item If $|V|=1$, $E'$ is trivial PED.
\item If $|V|\geq 2$, $E'$ is trivial PED if and only if there is no white vertex.
\item $E'$ is EED if and only if there is no black vertex.
\item If $E'$ is proper PED ($E'$ is not trivial PED nor EED) then there are some black vertex and some white vertex. Clearly, every degree one vertex can not be black. There is some black vertex $u\in V$ with degree at least degree 2. If $G$ is an NSF graph, then $v$ is in some triangle $T$ of $G$ and all vertices of degree 2 or more are black since for each vertex $v$ of degree at least 2, there is a path connecting $u$ with it. Every vertex in this path is in some triangle. Consequently, all degree one vertices are yellow and there is no white vertex which is a contradiction. Then $E'$ is not proper PED or $G$ is not an NSF graph.
\end{itemize}
}

In this article, we show that deciding if a \textcolor{black}{connected} NSF
graph contains an EED is an NP-Complete problem. In fact, we describe  NP-Completeness proofs for several subclasses of \textcolor{black}{connected} NSF graphs. %In addition, we give a polynomial-time algorithm to solve the %minimum weight 
%EED problem for NSF graphs which are cricket-free, using a reduction motivated by that for claw-free graphs in \cite{CARDOSO2011521,LIN2014524}.

We briefly describe some known results related to the present paper.

There is a concept of bi-coloring associated to an EED set \cite{CARDOSO2011521} as follows. Given a graph $G=(V,E)$ and an EED $E' \subseteq E$ of $G$, vertices of $G$ can be classified into two color classes.

\begin{description}
\item[black vertices] if they have exactly one incident edge of $E'$. We denote this subset of vertices by $B$.
\item[white vertices] if they do not have any incident edge of $E'$. We denote this subset of vertices by $W$.
\end{description}

We call $(B,W)$ the \textit{\textbf{bi-coloring associated}} to $E'$.

\begin{observation}
The subset  $B$ of $G$ induces in $G$ the EED $E'$. Moreover, $E'$ is an induced matching of $G$.
\end{observation}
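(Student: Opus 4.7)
The plan is to establish two facts: (a) no vertex of $G$ is incident to two distinct edges of $E'$, and (b) no edge of $E \setminus E'$ has both endpoints incident to edges of $E'$. Both facts follow directly from the ``exactly one'' clause in the EED definition.

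First I would verify (a). Suppose for contradiction that some vertex $v$ is incident to two distinct edges $e_1, e_2 \in E'$. Since $e_1 \in E'$ dominates itself, and $e_2 \in E'$ is adjacent to $e_1$ through $v$ and therefore also dominates $e_1$, the edge $e_1$ would be dominated by two distinct edges of $E'$, contradicting the definition of EED. Hence every vertex is incident to at most one edge of $E'$, which means the black/white partition $(B,W)$ is well defined on $V$, and already yields that $E'$ is a matching.

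Next I would verify (b). Let $u, v \in B$ and suppose there is an edge $f = uv \in E \setminus E'$. Let $e_u, e_v \in E'$ be the unique $E'$-edges incident to $u$ and $v$, respectively; by (a) these are well defined. Then $e_u$ dominates $f$ through $u$ and $e_v$ dominates $f$ through $v$, so $f$ is dominated by two distinct edges of $E'$, again contradicting the EED property. Consequently, the only edges of $G[B]$ are the edges of $E'$ themselves, so $G[B]$ induces exactly $E'$, and combined with the matching property from (a) this shows $E'$ is an induced matching.

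The argument is entirely definition-chasing, so I do not expect any real obstacle; the only subtle point is being careful that ``dominates exactly one edge'' applies to every edge of $E$, including those in $E'$, which is precisely what rules out two $E'$-edges meeting at a vertex and what rules out a chord between two matched pairs.
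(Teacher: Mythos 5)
Your proof is correct; the paper states this as an unproved Observation, and your definition-chasing argument is exactly the intended one (vertex incident to two $E'$-edges $\Rightarrow$ one of them doubly dominated; chord between two black vertices $\Rightarrow$ doubly dominated edge). The only micro-step worth making explicit in part (b) is why $e_u \neq e_v$: if they coincided, that common edge would have endpoints $u$ and $v$ and hence be $f$ itself, contradicting $f \notin E'$.
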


For the above reason, an EED is also called  \textit{\textbf{dominating induced matching (DIM)}} of $G$. From now on, we will use either the terms EED or DIM, with the same meaning.

The following rules are useful to verify if a bi-coloring of a graph $G$ is associated to some DIM of $G$. In the positive case, we say that it is a \textit{\textbf{valid}} bi-coloring of $G$. 
\begin{theorem}\label{check-rulesOri} \cite{CARDOSO2011521}
Given a graph $G$ and a bi-coloring $(B,W)$ of $G$. Then  $(B,W)$ is valid if and only if it verifies the following conditions.
\begin{itemize}
\item $G[B]$ is 1-regular which means every black vertex has exactly one black neighbor.
\item $W$ is an independent set. That is, all neighbors of white vertices have black color.   
\end{itemize}
\end{theorem}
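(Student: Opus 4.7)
The plan is to prove the two directions of the equivalence separately, relying only on the definitions of DIM and of the bi-coloring $(B,W)$ associated to a DIM.

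For the forward ($\Rightarrow$) direction, assume $(B,W)$ is the bi-coloring associated to some DIM $E'$. To show $G[B]$ is $1$-regular, I pick an arbitrary $v \in B$, invoke the definition to obtain its unique incident $E'$-edge $vu$, and note that $u$ is itself black because $vu$ is incident to $u$ and $E'$ is a matching (so $u$ has exactly one $E'$-edge). This gives $v$ at least one black neighbor. To rule out a second one, suppose $u' \in N(v) \cap B$ with $u' \neq u$; then $u'$ has a unique $E'$-edge $u't$ disjoint from $vu$, and the edge $vu' \in E$ would be dominated by both $vu$ and $u't$, contradicting the uniqueness built into the DIM definition. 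For independence of $W$, I argue by contradiction: if $w,w' \in W$ were adjacent, then neither endpoint has any incident $E'$-edge, so no $E'$-edge is adjacent to $ww'$, contradicting the fact that $ww'$ must be dominated by $E'$.

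For the backward ($\Leftarrow$) direction, the natural candidate is $E' := E(G[B])$. Using $1$-regularity of $G[B]$, $E'$ is immediately a perfect matching of $G[B]$, and in fact an induced matching of $G$: any extra $G$-edge between two distinct $E'$-edges would raise the $G[B]$-degree of one endpoint above one. Next I would verify that each edge of $G$ is dominated by exactly one edge of $E'$, using a case split on endpoint colors. An edge already in $E'$ is dominated only by itself, since its endpoints have no other black neighbors by $1$-regularity. An edge $uv \in E \setminus E'$ has at least one endpoint outside $B$ (otherwise $uv \in E(G[B]) = E'$), and by independence of $W$ cannot have two white endpoints; hence exactly one endpoint, say $u$, is black, and its unique black neighbor $u'$ yields the unique dominating edge $uu' \in E'$ adjacent to $uv$. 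Finally, I would check that $(B,W)$ really is the bi-coloring associated to this $E'$: by $1$-regularity each $v \in B$ has exactly one incident $E'$-edge, and each $w \in W$ has none (since $E' \subseteq E(G[B])$).

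The main obstacle is the uniqueness-of-dominators reasoning, especially in the forward direction where one must translate "$v$ has a second black neighbor" into a pair of $E'$-edges creating an edge with two dominators. The backward direction is more routine, but the key insight that makes it work is realizing that $W$-independence is precisely what forbids an edge of $E \setminus E'$ from having two white endpoints, which in turn is what allows the unique-dominator count to come out to exactly one.
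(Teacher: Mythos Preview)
Your proof is correct in both directions; the definitions are unwound carefully and the uniqueness-of-dominators argument in the forward direction is handled properly (in particular, you implicitly use that $E'$ is a matching to rule out $t\in\{u,v\}$, so the second dominating edge $u't$ is genuinely distinct from $vu$).

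Note, however, that the paper does not supply its own proof of this statement: Theorem~\ref{check-rulesOri} is quoted from \cite{CARDOSO2011521} and used as a black box. So there is no ``paper's proof'' to compare against here. Your argument is the standard direct verification from the definitions of an efficient edge dominating set and its associated bi-coloring, and it is exactly what one would expect a self-contained proof to look like.
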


\textcolor{black}{As in \cite{LIN2017}, assigning one of the two possible colors to vertices of $G$ is called a \textit{\textbf{coloring}} of $G$. A coloring is \textit{\textbf{partial}} if only part of the vertices of $G$ has been assigned colors, otherwise is \textit{\textbf{total}}}. 

\begin{definition}\label{valid partial coloring} \cite{LIN2017}
A partial coloring of $G$ is \textit{\textbf{valid}} if verifies the following conditions.
\begin{itemize}
\item White vertices form an independent set.
\item Every black vertex has at most one black neighbor.
\item Every black vertex without black neighbors has an uncolored neighbor.
\end{itemize}
\end{definition}
It is clear that any invalid partial coloring cannot be extended to a valid total coloring.

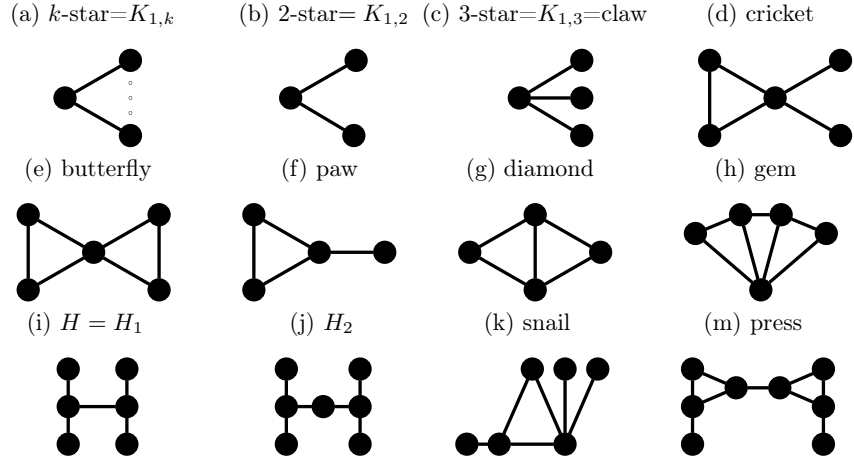
\begin{figure}
\centering
%%%%%Gadgets
\begin{tikzpicture}[v/.style={circle, scale=0.8, draw=black, fill},q/.style={circle, scale=0.01, draw=black, fill},%p/.style={rectangle, scale=0.005, draw=black, fill},
line width=1.25pt]
% Star
\node (0) [v] at (-0.44,0.5){};
\node (1) [q] at (0.43,0.3){};
\node (2) [q] at (0.43,0.5){};
\node (5) [q] at (0.43,0.7){};
\node (3) [v] at (0.43,0){};
\node (4) [v] at (0.43,1){};
%\node (6) [p] at (0.0,0.5){};
%\node (7) [p] at (0.14,0.5){};
%\node (8) [p] at (0.28,0.5){};
%\node (9) [p] at (-0.14,0.5){};
%\node (10) [p] at (-0.28,0.5){};
\draw[] (3) -- (0) -- (4);
\node (a)[scale=0.9] at (0-0.075,1.6){(a) $k$-star=$K_{1,k}$};
% P_2
\node (0c) [v] at (-0.44+3,0.5){};
\node (3c) [v] at (0.4+3,0){};
\node (4c) [v] at (0.43+3,1){};
\draw[] (3c) -- (0c) -- (4c);
\node (a)[scale=0.9] at (3,1.6){(b) $2$-star$=K_{1,2}$};
% claw
\node (0d) [v] at (-0.4+6,0.5){};
\node (2d) [v] at (0.43+6,0.5){};
\node (3d) [v] at (0.43+6,0){};
\node (4d) [v] at (0.43+6,1){};
\draw[] (3d) -- (0d) -- (4d);
\draw[] (0d) -- (2d);
\node (c)[scale=0.9] at (5.8,1.6){(c) $3$-star=$K_{1,3}$=claw};
% claw
%\node (0b) [v] at (-0.4+9,0.5){};
%\node (2b) [v] at (0.43+9,0.33){};
%\node (5b) [v] at (0.43+9,0.67){};
%\node (3b) [v] at (0.43+9,0){};
%\node (4b) [v] at (0.43+9,1){};
%\draw[] (3b) -- (0b) -- (4b);
%\draw[] (5b) -- (0b) -- (2b);
%\node (c)[scale=0.9] at (5.8+3,1.6){(d) $4$-star=$K_{1,4}$};

\node (0b) [v] at (0+9,0.5){};
\node (1b) [v] at (-0.87+9,0){};
\node (2b) [v] at (-0.87+9,1){};
\node (3b) [v] at (0.87+9,0){};
\node (4b) [v] at (0.87+9,1){};
\draw[] (0b) -- (1b) -- (2b) -- (0b);
\draw[] (3b) -- (0b) -- (4b);
\node (c)[scale=0.9] at (5.8+3,1.6){(d) cricket};

\end{tikzpicture}
%%%%%Gadgets
\begin{tikzpicture}[v/.style={circle, scale=0.8, draw=black, fill},
line width=1.25pt]
% Buterfly
\node (0) [v] at (0,0.5){};
\node (1) [v] at (-0.87,0){};
\node (2) [v] at (-0.87,1){};
\node (3) [v] at (0.87,0){};
\node (4) [v] at (0.87,1){};
\draw[] (0) -- (1) -- (2) -- (0);
\draw[] (3) -- (0) -- (4) -- (3);
\node (a)[scale=0.9] at (0-0.075,1.6){(e) butterfly};
% Paw
\node (0) [v] at (0+3,0.5){};
\node (1) [v] at (-0.87+3,0){};
\node (2) [v] at (-0.87+3,1){};
\node (3) [v] at (0.87+3,0.5){};

\draw[] (0) -- (1) -- (2) -- (0);
\draw[] (3) -- (0);
\node (a)[scale=0.9] at (3,1.6){(f) paw};
%%%Diamond
\node (0c) [v] at (6-1,0.5){};
\node (1c) [v] at (6.87-1,1){};
\node (2c) [v] at (6.87-1,0){};
\node (3c) [v] at (7.74-1,0.5){};
\node (c)[scale=0.9] at (5.8,1.6){(g) diamond};
\draw[] (0c) -- (1c) -- (2c) -- (0c);
\draw[] (1c) -- (3c) -- (2c);
%%%Gem
\node (0g) [v] at (8,0.75){};
\node (1g) [v] at (8.6,1){};
\node (2g) [v] at (9.14,1){};
\node (3g) [v] at (9.74,0.75){};
\node (4g) [v] at (8.87,0){};
\node (c)[scale=0.9] at (5.8+3,1.6){(h) gem};
\draw[] (0g) -- (1g) -- (2g) -- (3g);
\draw[] (0g) -- (4g) -- (1g);
\draw[] (2g) -- (4g) -- (3g);

\end{tikzpicture}
%%%%%Gadgets
\begin{tikzpicture}[v/.style={circle, scale=0.8, draw=black, fill},
line width=1.25pt]
% H
\node (0) [v] at (0-0.3,0){};
\node (1) [v] at (0-0.3,0.5){};
\node (2) [v] at (0-0.3,1){};
\node (3) [v] at (0.74-0.26,0){};
\node (4) [v] at (0.74-0.26,0.5){};
\node (5) [v] at (0.74-0.26,1){};
\draw[] (0) -- (1) -- (2);
\draw[] (3) -- (4) -- (5);
\draw[] (1) -- (4);
\node (a)[scale=0.9] at (0-0.075,1.6){(i) $H=H_1$};
% H2
\node (f0) [v] at (0-0.3+3-.1,0){};
\node (f1) [v] at (0-0.3+3-.1,0.5){};
\node (f2) [v] at (0-0.3+3-.1,1){};
\node (f3) [v] at (0.74-0.26+3+.1,0){};
\node (f4) [v] at (0.74-0.26+3+.1,0.5){};
\node (f5) [v] at (0.74-0.26+3+.1,1){};
\node (f6) [v] at (0.09+3,0.5){};

\draw[] (f0) -- (f1) -- (f2);
\draw[] (f3) -- (f4) -- (f5);
\draw[] (f1) -- (f6) -- (f4);

\node (a)[scale=0.9] at (3+.07,1.6){(j) $H_2$};
%%%Snail
\node (0g) [v] at (5,0){};
\node (1g) [v] at (5.43,0){};
\node (2g) [v] at (5.87,1){};
\node (3g) [v] at (6.305,0){};
\node (4g) [v] at (6.305,1){};
\node (5g) [v] at (6.74,1){};
\node (c)[scale=0.9] at (5.8,1.6){(k) snail};
\draw[] (0g) -- (1g) -- (2g) -- (3g) -- (4g);
\draw[] (1g) -- (3g) -- (5g);
%%%Gem
\node (0h) [v] at (8,1){};
\node (1h) [v] at (8,0.5){};
\node (2h) [v] at (8,0.0){};
\node (3h) [v] at (8.58,0.75){};
\node (4h) [v] at (9.16,0.75){};
\node (5h) [v] at (9.74,1){};
\node (6h) [v] at (9.74,0.5){};
\node (7h) [v] at (9.74,0.0){};
\node (c)[scale=0.9] at (5.8+3,1.6){(m) press};
\draw[] (2h) -- (1h) -- (0h) -- (3h) --(4h) --(5h) -- (6h) -- (7h);
\draw[] (1h) -- (3h);
\draw[] (4h) -- (6h);

\end{tikzpicture}

\caption{Some graphs of interests}
\label{Fig0}

\end{figure}
%%%

\begin{figure}
\centering

%%%%%Gadgets
\begin{tikzpicture}[v/.style={circle, scale=0.8, draw=black, fill},
line width=1.25pt]
% K_4=W_3
\node (0k) [v] at (0+0.45,0.5){};
\node (1k) [v] at (-0.87+0.45,0){};
\node (2k) [v] at (-0.87+0.45,1){};
\node (3k) [v] at (-.56+0.45,0.5){};

\draw[] (0k) -- (1k) -- (2k) -- (0k) -- (3k);
\draw[] (1k) -- (3k) -- (2k);
\node (a)[scale=0.9] at (0-0.075,1.6){(a) $K_4=W_3$};
% W_4
\node (0) [v] at (0+3,0.5){};
\node (1) [v] at (-0.87+3,0){};
\node (2) [v] at (-0.87+3,1){};
\node (3) [v] at (0.87+3,0){};
\node (4) [v] at (0.87+3,1){};
\draw[] (3) -- (1) -- (2) -- (4) -- (3);
\draw[] (1) -- (0) -- (2);
\draw[] (3) -- (0) -- (4);
\node (a)[scale=0.9] at (3,1.6){(b) $W_4$};

%%% W_5
\node (0c) [v] at (6-1,0.65){};
\node (1c) [v] at (6.87-1,1){};
\node (2c) [v] at (6+0.5-1,0){};
\node (4c) [v] at (7.74-0.5-1,0){};
\node (3c) [v] at (7.74-1,0.65){};
\node (5c) [v] at (6.87-1,0.5){};
\node (c)[scale=0.9] at (5.8,1.6){(c) $W_5$};
\draw[] (0c) -- (1c) -- (3c) -- (4c) -- (2c) -- (0c) -- (5c) -- (4c);
\draw[] (2c) -- (5c) -- (3c);
\draw[] (1c) -- (5c);

\end{tikzpicture}

\caption{Some wheels}
\label{FigWheel}
\end{figure}
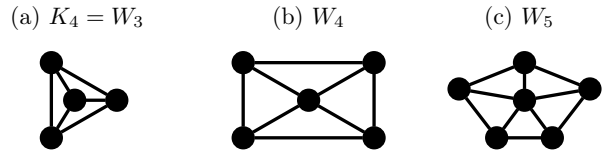
%%%

For the next definitions, we need some more basic terminology.

A vertex is \textit{\textbf{universal}} if it is neighbor of all other vertices \textcolor{black}{of the graph}. A vertex is \textit{\textbf{simplicial}} if the subgraph induced by its neighborhood is complete. 
\textcolor{black}{Next we define some of the graphs depicted in Figure \ref{Fig0}.} 
A \textit{\textbf{butterfly}} is an NSF graph that consists of exactly two triangles sharing a single vertex. 
We call this vertex its \textit{\textbf{central}} vertex. 
If we remove any vertex except the central vertex, the resulting graph is a \textit{\textbf{paw}}. 
A \textit{\textbf{pendant}} or \textit{\textbf{leaf}} vertex is a vertex of degree one and its unique neighbor is a \textit{\textbf{preleaf}} vertex. $P_k$ is a connected graph with exactly $k-2$ vertices of degree 2 and two leaves. $C_k$ is a connected graph of $k$ vertices of degree 2. 
\textcolor{black}{A \textit{\textbf{wheel}}} $W_k$ is a $C_k$ plus a universal vertex (see Figure \ref{FigWheel}). 
A \textit{\textbf{diamond}} is the resulting graph after deleting an edge from a $K_4$. 
A \textit{\textbf{gem}} is \textcolor{black}{a $P_4$} plus a universal vertex.
An $H$ graph is two copies of $P_3$ plus an edge connecting both middle vertices of $P_3$'s. The latter edge is the \textit{\textbf{middle}} edge of $H$.
$H_k$ is obtained by replacing the middle edge $(v,u)$ of $H$ by a path $P_{k+1}$ that connects $v$ and $u$. Exactly $k-1$ new vertices of degree 2 are added.
A \textit{\textbf{snail}} is obtained by \textcolor{black}{adding 3 pendant vertices to a triangle, in such a way that two of them become adjacent to the same vertex of the triangle.}
A \textit{\textbf{press}} graph consists of two copies of paw plus an edge connecting a degree 2 vertex of each paw.
\\
There are some structures and properties related to valid bi-colorings.
\begin{theorem}\label{color-rules} \cite{CARDOSO2011521}
Given a graph $G$ and any valid bi-coloring $(B,W)$ of $G$ 
\begin{itemize}
\item Any triangle of $G$ has exactly one white vertex.
\item The central vertex of an induced butterfly $F$ of $G$ is the unique white vertex of $F$.
\item The odd degree %(in the paw) 
vertices in an induced paw of $G$ have different colors.
\item In an induced diamond, the vertices of degree %(in the diamond)
2 are white and the other two vertices are black.
\end{itemize}
\end{theorem}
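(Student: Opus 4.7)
The plan is to verify each of the four items independently by case analysis on how the vertices of the small structure are colored, using only the two defining properties of a valid bi-coloring from Theorem~\ref{check-rulesOri}: every black vertex has exactly one black neighbor in all of $G$, and no two adjacent vertices are both white. Because each structure (triangle, butterfly, paw, diamond) has at most five vertices, the list of candidate colorings is short and can be enumerated directly.

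For the triangle, I would rule out the all-white case and every two-white case by independence of $W$ (any two of the three vertices are adjacent), and the all-black case because $1$-regularity of $G[B]$ forbids each vertex from having two black neighbors in the same triangle. The only surviving possibility is one white and two black vertices. This item is then used as a lemma for the remaining three.

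For the butterfly with central vertex $v$ and triangles $T_1=\{v,a,b\}$, $T_2=\{v,c,d\}$, I would assume for contradiction that $v$ is black. Applying the triangle result to each $T_i$ places the unique white vertex of $T_i$ outside $\{v\}$, so $v$ has at least one black neighbor in each of $T_1$ and $T_2$, giving two distinct black neighbors of $v$ and contradicting $1$-regularity. For the paw, with triangle $\{v,c,d\}$ and pendant $b$ adjacent to $v$, if $v$ is white then $b$ is forced black by independence of $W$; if $v$ is black, then the triangle result makes one of $c,d$ black and adjacent to $v$, which already uses the unique black slot at $v$, so $b$ cannot be black. In both branches the odd-degree vertices $v$ and $b$ receive opposite colors.

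For the diamond on $\{a,b,c,d\}$ with $c,d$ the degree-$2$ vertices and triangles $\{a,b,c\}$, $\{a,b,d\}$ sharing the edge $ab$, I would show that neither degree-$3$ vertex can be white: if $a$ were white, the triangle result would force $b,c,d$ all black, making $b$ have the two black neighbors $c$ and $d$ and contradicting $1$-regularity; the case $b$ white is symmetric. Hence $a,b$ are both black, after which the triangle result forces $c$ and $d$ to be the unique white vertices of their respective triangles. The main obstacle I anticipate is conceptual rather than computational: each of the structures is only \emph{induced} in $G$, so black vertices may have additional neighbors outside, and one must check that the $1$-regularity contradictions are produced purely from adjacencies internal to the structure so that the local case analysis remains sound in the global graph.
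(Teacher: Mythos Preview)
Your argument is correct in every part, and your closing remark is exactly the right sanity check: all the $1$-regularity contradictions you derive come from adjacencies internal to the induced substructure, so they hold regardless of what happens in the rest of $G$.

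As for comparison with the paper: there is nothing to compare against. The paper does not prove Theorem~\ref{color-rules}; it is quoted as a known result from \cite{CARDOSO2011521} and used as a black box in the later arguments. Your self-contained case analysis is a perfectly good proof of the statement and is precisely the kind of direct verification one would expect for these small configurations.
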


There is some more bibliography to add to the already vast literature \cite{Br2018,DBLP:reference/algo/BrandstadtN16,LIN2017,Moyano-2017-thesis,Nevries-2014-thesis,Xiao-agamochi} on dominating induced matchings. As we have seen before the papers on perfect edge domination are less frequent. There is a paper \cite{doFor-Vi-Lin-Lu-Ma-Mo-Sz-2020} where the authors describe ILP formulations for the PED problem, together with some experimental results. 

\section{NP-Completeness results}

We consider in this work five variants of the 1in3SAT problem, two of them are well-known NP-Complete problems, and we prove that the other three are NP-Complete. Then, we reduce them to the existence of DIMs on some subclasses of connected NSF graphs.
\\

Given a formula $F$ in conjunctive normal form (CNF), its associated graph $G(F)$ is a bipartite graph where one of the classes of the bipartition corresponds to the clauses and the other \textcolor{black}{one} to the variables. We follow the convention of previous articles in this field, using rectangles to represent vertices corresponding to clauses and circles for those \textcolor{black}{that are associated} to variables. In $G(F)$, two vertices are neighbors if one of them corresponds to a variable $x_j$ (circle), the other one corresponds to a clause $C_i$ (rectangle), and $x_j$ or $\neg x_j$ is a literal of $C_i$. As we work \textcolor{black}{with 1in3SAT, every} clause of $F$ has exactly three different literals which implies that every clause (rectangle) vertex has degree 3 in $G(F)$. We say that an assignment is \textit{\textbf{valid}} if in every clause there is exactly one true literal. Two formulas $F$ and $F'$ in \textcolor{black}{CNF are equivalent} if there is a valid assignment for $F$ if only if there is a valid assignment for $F'$. A formula $F$ is \textit{\textbf{positive}} if all literals in $F$ are positive.
 
On the other hand, it is highly probable that the associated graphs of hard instances (to determine if there is a valid assignment) contain some induced $C_4$. In many NP-completeness proofs variants of 1in3SAT (for instance \cite{Mo-Ro}), the reduction adds two clauses that repeat two literals in order to assure that the third literal has the same value in any valid assignment. This action makes the associated graph to contain induced $C_4$'s. 
\\

Below are the six decision problems that we deal with \textcolor{black}{in this article. The first two of them are known NP-Complete variants of 1in3SAT}
\\

PLANAR POSITIVE 1in3SAT \\
INPUT: A positive formula $F$ in CNF, each clause is a disjunction of exactly three different literals and the associated graph $G(F)$ is planar\\
QUESTION: Is there a valid assignment \textcolor{black}{for $F$?}
\\
 
The NP-Completeness proof of this variant can be found in \cite{Laroche,Mu-Ro}.\\

CUBIC PLANAR POSITIVE 1in3SAT \\
INPUT: A positive formula $F$ in CNF, each clause is a disjunction of exactly three different literals and the associated graph $G(F)$ is cubic planar\\
QUESTION: Is there a valid assignment \textcolor{black}{for $F$?}
\\

The NP-Completeness proof of this more restricted variant can be found in \cite{Mo-Ro} where a polynomial reduction is presented to transform an input instance $F$ of PLANAR POSITIVE 1in3SAT %(a positive formula $F$ in CNF, each clause is a disjunction of exactly three different literals and the associated graph $G(F)$ is planar) 
to an instance $F'$ of CUBIC PLANAR POSITIVE 1in3SAT. %(a positive formula $F'$ in CNF, each clause is a disjunction of exactly three different literals and the associated graph $G(F')$ is cubic planar)
It is important to see that if $G(F)$ is connected then $G(F')$ is also connected.
\\

Now, we consider the new connected versions of these two previous 1in3SAT variants.
\\

CONNECTED PLANAR POSITIVE 1in3SAT \\
INPUT: A positive formula $F$ in CNF, each clause is a disjunction of exactly three different literals and the associated graph $G(F)$ is connected and planar\\
QUESTION: Is there a valid assignment \textcolor{black}{for $F$?}
\\

We prove this restricted variant 1in3SAT is also NP-Complete. Let $F$ be an input instance of PLANAR POSITIVE 1in3SAT, we will construct a positive formula $F^*$ in CNF, input instance of CONNECTED PLANAR POSITIVE 1in3SAT, and show that $F$ and $F^*$ are equivalent. This construction is iterative and will stop when the obtained formula is a valid input instance for CONNECTED PLANAR POSITIVE 1in3SAT. We will show in every step, the new formula $F^*$ of the step is a valid input instance of PLANAR POSITIVE 1in3SAT, the number of connected components of $G(F^*)$ has decreased by 1 compared to the previous formula and both formulas are equivalent.
% which means that if one of them is satisfiable if the other does. 
Initially, let $F^*=F$. Repeat the following PROCEDUE while $G(F^*)$ is not connected.\\
\;\\
PROCEDURE
\begin{enumerate}
\item Let $R$ be a planar representation of $G(F^*)$. 
\item Select $G_1$ and $G_2$, two different connected components of $G(F^*)$.
\item As $G(F^*)$ is a bipartite graph and every rectangle (clause) vertex has exactly 3 circle (variable) vertices as neighbors. Therefore, every connected component of $G(F^*)$ has some circle vertex which lies in the external region of $R$. 
Let $v_1$ (corresponds to some variable $x$) and $v_2$ (corresponds to some variable $y$), circle vertex with this property for $G_1$ and for $G_2$, respectively.
\item Add 3 variables ($x',y'$ and $z$) and 2 clauses ($x \vee z \vee x'$ and $y \vee z \vee y'$) to $F^*$.
\end{enumerate}

It is clear that after the execution of the above procedure, the number of connected components of $G(F^*)$ has decreased by 1 since there is a path connecting $v_1$ and $v_2$ passing through vertices corresponding to $x \vee z \vee x'$, $z$ and $y \vee z \vee y'$. It is not hard to see that the new $G(F^*)$ is still planar (the five new vertices and the six new edges can be drawn in the external region of $R$ without edge crossing) and $F^*$ is still a valid input instance of PLANAR POSITIVE 1in3SAT. Now, if the new $F^*$ is satisfiable then the old $F^*$ is satisfiable because all clauses of the old one are clauses of the new one. If the old $F^*$ is satisfiable, we can extend any valid assignment for it to a valid assignment for the new $F^*$ as below: setting $FALSE$ to $z$, the value of $\neg x$ to $x'$ and $\neg y$ to $y'$. With this the proof is completed.\\

CONNECTED CUBIC PLANAR POSITIVE 1in3SAT \\
INPUT: A positive formula $F$ in CNF, each clause is a disjunction of exactly three different literals and the associated graph $G(F)$ is connected and cubic planar\\
QUESTION: Is there a valid assignment \textcolor{black}{for $F$?}
\\

The polynomial reduction given in \cite{Mo-Ro} transforms connected instances of PLANAR POSITIVE 1in3SAT to connected instances of CUBIC PLANAR POSITIVE 1in3SAT. The NP-Completeness of this variant is guaranteed by the NP-Completeness of CONNECTED CUBIC PLANAR POSITIVE 1in3SAT and the correctness of this reduction.\\

The following is our last new proposed variant of 1in3SAT.
\\

CONNECTED SUBCUBIC PLANAR $C_4$-FREE POSITIVE 1in3SAT \\ 
INPUT: A positive formula $F$ in CNF, each clause is a disjunction of exactly three different literals and the associated graph $G(F)$ is connected, planar, $C_4$-free, with degree at most 3\\
QUESTION: Is there a valid assignment for $F$?
\\

Finally, our target problem.
\\

EXISTENCE OF DIM FOR CONNECTED NSF GRAPHS\\
INPUT: Connected NSF graph $G$\\
QUESTION: Does $G$ contain a dominating induced matching?
\\

\textcolor{black}{Next we prove that CONNECTED SUBCUBIC PLANAR $C_4$-FREE POSITIVE 1in3SAT is NP-complete.}

\begin{theorem}\label{subcubic1in3SAT}
\textcolor{black}{Connected subcubic planar $C_4$-free positive 1in3SAT is NP-Complete.}
\end {theorem}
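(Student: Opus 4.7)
The plan is to reduce from CUBIC PLANAR MONOTONE 1in3SAT, which is NP-complete by \cite{Mo-Ro}; membership in NP is immediate, as a satisfying assignment is a polynomial-size certificate. Given a monotone cubic planar instance $F$, first pass to its positivized form $F'$ (a no-op if $F$ is already positive, otherwise rename every variable by its complement in the all-negative case). This leaves $G(F)$ unchanged and preserves the set of 1in3 assignments, so we may assume the starting instance is a positive, cubic, planar 1in3SAT formula $F'$.

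The core ingredient is a small positive equality gadget. Given distinguished variables $x,y$, introduce fresh variables $a,b,c,d,e$ and the four clauses
$(x \vee a \vee b),\ (a \vee c \vee d),\ (b \vee c \vee e),\ (y \vee d \vee e)$.
Summing the four 1in3 equations over $\mathbb{Z}$ gives $x + y + 2(a + b + c + d + e) = 4$, forcing $x + y \in \{0,2\}$ and hence $x = y$; and for each value $x = y \in \{0,1\}$ the internal variables can be extended to a valid assignment (e.g.\ $x = y = 1$ forces $c = 1$ and $a = b = d = e = 0$, while $x = y = 0$ admits two symmetric completions). The gadget is planar with $x, y$ on its outer face (embed $C_1$ and $C_4$ on the left and right of a horizontal strip, with $C_2, C_3$ in between, joined through $c$), every new vertex has degree at most $3$, and no two of its clauses share more than one variable, so its induced subgraph is $C_4$-free.

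Using this gadget we perform a local replacement at every variable of $G(F')$. If $x$ occurs in the three clauses $C^1, C^2, C^3$, introduce three fresh copies $x^1, x^2, x^3$ replacing $x$ in those clauses and glue in two equality gadgets enforcing $x^1 = x^2$ and $x^2 = x^3$. The three endpoints $x^1, x^2, x^3$ lie on the outer face of the resulting planar subgraph, so the replacement fits inside a small disk of the planar embedding of $G(F')$ and planarity is preserved. Degrees remain at most $3$: the terminal copies $x^1, x^3$ have degree $2$ (one original edge plus one gadget edge), the middle copy $x^2$ has degree $3$ (one original edge plus two gadget edges), every internal gadget variable has degree $2$, and every clause keeps degree $3$. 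After carrying out this replacement simultaneously at all variables, the original clauses become pairwise variable-disjoint, and a short case analysis (clause pairs within one gadget, pairs of consecutive gadgets sharing the hub copy $x^2$, and original clauses meeting a gadget at one endpoint) shows that every pair of clauses in the resulting formula $F''$ shares at most one variable, so $G(F'')$ is $C_4$-free. Because the gadgets force each family $\{x^1, x^2, x^3\}$ to take a common Boolean value, valid 1in3 assignments of $F''$ correspond bijectively to those of $F'$, and the construction is clearly polynomial.

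The main obstacle is the equality gadget itself. Positive 1in3SAT supports only ``exactly one of three'' constraints, and the naive two-clause attempt $(x \vee u \vee v),(y \vee u \vee v)$ immediately introduces a $C_4$ on the two clauses together with $u$ and $v$; any correct gadget must topologically separate its two endpoints enough to avoid this while still forcing agreement in all combinations of the Boolean values of $x$ and $y$. Verifying that the four-clause construction above simultaneously satisfies the three requirements (equality enforcement, $C_4$-freeness, and subcubicity with endpoints on the outer face) and that chaining gadgets through the hub copy $x^2$ does not recreate a $4$-cycle is the technical heart of the reduction.
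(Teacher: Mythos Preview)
Your proof is correct and follows essentially the same approach as the paper: the four-clause, five-variable equality gadget you describe is, up to relabeling ($a=w^1$, $b=w^2$, $c=w^3$, $d=w^4$, $e=w^5$), identical to the paper's gadget, and the overall reduction (positivize, split each cubic variable into three copies, chain two gadgets through the middle copy) is the same. Your parity argument $x+y+2(a+b+c+d+e)=4$ is a cleaner way to see $x=y$ than the paper's explicit case analysis, but the construction and verification of planarity, subcubicity, and $C_4$-freeness are otherwise the same.
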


\begin{proof}

\textcolor{black}{Firstly we} describe a polynomial-time reduction from an input instance of Connected cubic planar positive 1in3SAT, a formula $F$, to an input instance of Connected subcubic planar $C_4$-free positive 1in3SAT, a formula $F'$. \textcolor{black}{Secondly,} we show that $F$ and $F'$ are equivalent. This will prove that Connected subcubic planar $C_4$-free positive 1in3SAT is NP-Complete.

The reduction will consider a positive formula $F$ as input (where all literals are positive). The associated graph $G(F)$ is connected cubic planar bipartite. Every variable $y_i$ of $F$, appears exactly in 3 clauses $C_i^1,C_i^2,C_i^3$ as positive literal. We replace $y_i$ by three variables $x_i^1,x_i^2,x_i^3$ and $x_i^j$ replaces $y_i$ as positive literal in $C_i^j$ for $j=1,2,3$. In order to assure that $x_i^1,x_i^2,x_i^3$ have the same value in every valid assignment, we add 8 new clauses and 10 additional variables as follows (see Figure \ref{ReductionFig1}):
\\

For $j=1,2$
\begin{description}
\item $\qquad\qquad x_i^j\vee w_{i,j}^1\vee w_{i,j}^2$
\item $\qquad\qquad w_{i,j}^1\vee w_{i,j}^3\vee w_{i,j}^4$
\item $\qquad\qquad x_i^{j+1}\vee w_{i,j}^4\vee w_{i,j}^5$
\item $\qquad\qquad w_{i,j}^2\vee w_{i,j}^3\vee w_{i,j}^5$ 
\end{description}

%%%Figure{ReductionFig1}
\begin{figure}
\centering
\begin{tabular}{cc}

\begin{tikzpicture}[sq/.style={rectangle, scale=0.75, draw, fill=black!25, minimum size=0.7cm, inner sep=0pt},
c/.style={circle, scale=0.75, draw, minimum size=0.75cm, inner sep=0pt},
line width=1.25pt, xscale=0.6, yscale= 0.7]
\node (0) [c] at (2.7,0){$y_i$};
\node (1) [sq] at (7.25,0){$C^2_i$};
\node (2) [sq] at (0,4.5){$C^1_i$};
\node (3) [sq] at (0,-4.5){$C^3_i$};
\draw[dashed, x radius=3.9, y radius=4.9] (2.7,0) circle;
\draw[] (1) -- (0) -- (2);
\draw[] (0) -- (3);
\draw[] (8.2,0.1) -- (8.5,0.1) -- (8.5,0.3) -- (9,0) -- (8.5,-0.3) -- (8.5,-0.1) -- (8.2,-0.1) -- cycle;
\end{tikzpicture}

&
\begin{tikzpicture}[sq/.style={rectangle, scale=0.75, draw, fill=black!25, minimum size=0.7cm, inner sep=0pt},
c/.style={circle, scale=0.75, draw, minimum size=0.75cm, inner sep=0pt},
line width=1.25pt, xscale=0.6, yscale= 0.7]
\node (1) [sq] at (7.25,0){$C^2_i$};
\node (2) [sq] at (0,4.5){$C^1_i$};
\node (3) [sq] at (0,-4.5){$C^3_i$};
\draw[dashed, x radius=3.9, y radius=4.9] (2.7,0) circle;
\node (4) [c] at (6,0){$x^2_i$};
\node (5) [c] at (3.75,0.75){$w^4_{i,1}$};
\node (6) [c] at (4.75,1.875){$w^5_{i,1}$};
\node (7) [c] at (3.5,1.875){$w^3_{i,1}$};
\node (8) [c] at (2.25,1.875){$w^1_{i,1}$};
\node (9) [c] at (3.25,3){$w^2_{i,1}$};
\node (10) [c] at (1,3.75){$x^1_i$};
\node (Ca) [sq] at (5,0.75){};
\node (Cb) [sq] at (2.5,0.75){};
\node (Cc) [sq] at (4.5,3){};
\node (Cd) [sq] at (2,3){};
\draw[] (1) -- (4) -- (Ca) -- (5) -- (Cb) -- (7) -- (Cc) -- (6) -- (Ca);
\draw[] (Cb) -- (8) -- (Cd) -- (9) -- (Cc);
\draw[] (Cd) -- (10) -- (2);

\node (5') [c] at (3.75,-0.75){$w^2_{i,2}$};
\node (6') [c] at (4.75,-1.875){$w^1_{i,2}$};
\node (7') [c] at (3.5,-1.875){$w^3_{i,2}$};
\node (8') [c] at (2.25,-1.875){$w^5_{i,2}$};
\node (9') [c] at (3.25,-3){$w^4_{i,2}$};
\node (10') [c] at (1,-3.75){$x^3_i$};
\node (Ca') [sq] at (5,-0.75){};
\node (Cb') [sq] at (2.5,-0.75){};
\node (Cc') [sq] at (4.5,-3){};
\node (Cd') [sq] at (2,-3){};
\draw[] (4) -- (Ca') -- (5') -- (Cb') -- (7') -- (Cc') -- (6') -- (Ca');
\draw[] (Cb') -- (8') -- (Cd') -- (9') -- (Cc');
\draw[] (Cd') -- (10') -- (3);
\end{tikzpicture}

\end{tabular}
\caption{\textcolor{black}{Reduction employed in the NP completeness proof}}
\label{ReductionFig1}
\end{figure}
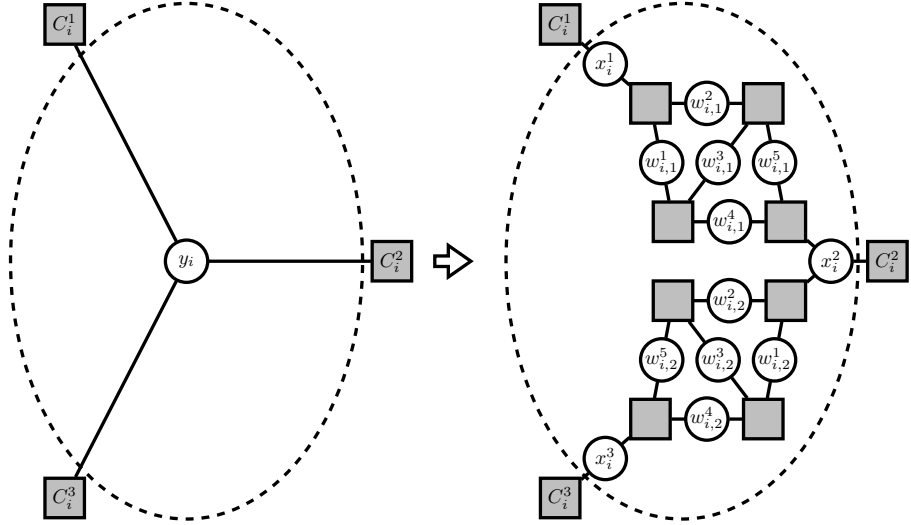
%%%

We show that $w_{i,j}^1,w_{i,j}^5$ have the same value in any valid assignment. If $w_{i,j}^1$ is true, then $x_i^j,w_{i,j}^2,w_{i,j}^3,w_{i,j}^4$ should be false. Therefore, $w_{i,j}^5$ should be the true literal in $w_{i,j}^2\vee w_{i,j}^3\vee w_{i,j}^5$. Symmetrically, if $w_{i,j}^5$ is true, then $x_i^{j+1},w_{i,j}^2,w_{i,j}^3,w_{i,j}^4$ should be false. Therefore, $w_{i,j}^1$ should be the true literal in $w_{i,j}^3\vee w_{i,j}^4\vee w_{i,j}^5$. Now, we prove that $x_i^j,x_i^{j+1}$ have the same value in any valid assignment. In case that $w_{i,j}^1,w_{i,j}^5$ are true, then $x_i^j,x_i^{j+1}$ are both false. Then, we can assume that $w_{i,j}^1,w_{i,j}^5$ are false. If $x_i^j$ is true, then $w_{i,j}^2$ is false and $w_{i,j}^3$ should be the true literal in $w_{i,j}^2\vee w_{i,j}^3\vee w_{i,j}^5$. Consequently, $w_{i,j}^4$ is false and $x_i^{j+1}$ should be the true literal in $x_i^{j+1}\vee w_{i,j}^4\vee w_{i,j}^5$. Symmetrically\textcolor{black}{, if} $x_i^{j+1}$ is true, $x_i^j$ would be true using a similar deduction. \textcolor{black}{As a consequence}, $x_i^1,x_i^2,x_i^3$ have the same value in every valid assignment as we need. Note that all these variables except $x_i^1,x_i^3$ are literals of exactly 2 of these 8 clauses. $x_i^1,x_i^3$ appear only once. But solely, $x_i^1,x_i^2,x_i^3$ appear in exactly one other clause ($C_i^1,C_i^2,C_i^3$, respectively). In conclusion, every variable appears at most in 3 clauses. Clearly, the resulting formula $F'$ is equivalent to $F$. If $F$ uses $n$ variables and $m$ clauses, then $F'$ has exactly $13n$ variables and $m+8n$ clauses. Therefore, $F'$ can be obtained in polynomial time. As all clauses have exactly 3 literals\textcolor{black}{, the associated graph $G(F')$ has} maximum degree 3. In case that $G(F')$ has an induced $C_4$ (see Figure \ref{C4Fig}). Two of the vertices of this $C_4$ correspond to clauses and the other two variables meaning that those two clauses have both variables as literals. This situation does not occur in the formula $F'$. The original $m$ clauses have their local copies of variables. They do not have any variable in common. Any pair of the new $8n$ clauses have at most one variable in common. An original clause and a new one can have at most one variable $x_i^j$ in common. Therefore, $G(F')$ is $C_4$-free. Given a planar representation of $G(F)$, this representation can be easily transformed to \textcolor{black}{a planar representation} of $G(F')$ (see Figure \ref{ReductionFig1}). It is clear that if $G(F)$ is connected then $G(F')$ also does. Finally, we can conclude that $G(F')$ is connected subcubic planar $C_4$-free. Consequently, we can state that Connected subcubic planar $C_4$-free positive 1in3SAT is NP-Complete.
\end{proof}
%%%Figure{C4Fig}
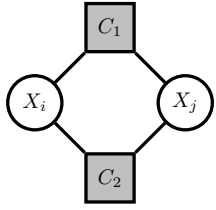
\begin{figure}
\centering
\begin{tikzpicture}[sq/.style={rectangle, scale=0.8, draw, fill=black!25, minimum size=0.8cm, inner sep=0pt},
c/.style={circle, scale=0.8, draw, minimum size=0.9cm, inner sep=0pt},
line width=1.25pt]

\node (0) [sq] at (0,1){$C_1$};
\node (1) [sq] at (0,-1){$C_2$};
\node (2) [c] at (-1,0){$X_i$};
\node (3) [c] at (1,0){$X_j$};
\draw[] (3) -- (0) -- (2) -- (1) -- (3);
\end{tikzpicture}

\caption{\textcolor{black}{An induced $C_4$ cannot exist in the associated graph $G(F')$}}
\label{C4Fig}
\end{figure}
%%%

%We give three alternatives for the gadget and its contact vertex which are used in our reduction.
%\begin{description}
%\item[(a)] $K_1$ and its unique vertex is the contact vertex 
%\item[(b)] butterfly and its central vertex is the contact vertex
%\item[(c)] diamond and choose any degree 2 vertex as the contact vertex
%\end{description}

Now, we give a polynomial-time reduction from an instance of input of Connected subcubic planar $C_4$-free positive 1in3SAT, a positive formula $F$ in CNF, to a graph $S(G(F))$ an instance of our final target problem.

Given such positive formula $F$ where each clause is a disjunction of exactly three different literals and its associate graph $G(F)$ is connected planar $C_4$-free with degree at most 3, we construct a graph $S(G(F))$ from $G(F)$ \textcolor{black}{as follows.}
\begin{enumerate}
\item Replace every rectangle vertex $v$ (corresponding to a clause $C_i$) of $G(F)$ by a triangle (clause triangle). Each vertex of this triangle takes a different neighbor (a circle vertex corresponding to a variable) of $v$ as its owner (see Figure \ref{ReplacingFig1}). Call this resulting graph as \textit{\textbf{temporary}} graph $T(G(F))$.
\item For each circle vertex $w$, add two mutually adjacent new vertices as its neighbors forming a triangle (vertex triangle) (see Figure \ref{ReplacingFig4}). This final graph is the graph $S(G(F))$. 
\end{enumerate}

For our convenience, we call this construction as {\bf Main Transformation}.

Clearly, $S(G(F))$ can be obtained in polynomial-time from $F$. If $F$ has $n$ variables and $m$ clauses, then $S(G(F))$ has exactly $3n+3m$ vertices and $3n+6m$ edges.

\begin{theorem}\label{SGF}
$S(G(F))$ is connected planar NSF ($K_4$,diamond,butterfly, $K_{1,5}$,$H$,\quad snail,press,$C_4$,$C_5$,$C_6$,$C_7$,$C_8$,$C_{3t+1}$,$C_{3t+2}$)-free for $t\geq 3$ without pendant \textcolor{black}{vertices.}
\end {theorem}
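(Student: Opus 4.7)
The approach splits naturally into three blocks: easy structural properties, the non-cyclic forbidden induced subgraphs, and the cycle analysis. For the easy block, planarity follows from the construction itself: $G(F)$ is planar, and both replacing a clause vertex by a triangle whose three vertices inherit the three incident edges (a vertex split) and attaching a disjoint triangle at each variable are planarity-preserving operations. Every vertex of $S(G(F))$ lies in either a clause triangle or a vertex triangle, giving the neighborhood-star-free property directly. A degree count shows that clause-triangle vertices have degree $3$, variables have degree between $3$ and $5$ (one to three owner edges plus two edges to their new vertices), and each new vertex has degree $2$, so there are no pendants.

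A short case analysis on three pairwise adjacent vertices shows that the triangles of $S(G(F))$ are exactly the clause triangles and vertex triangles, and that they are pairwise vertex-disjoint. Consequently $S(G(F))$ contains neither $K_4$ nor a diamond nor a butterfly, since each of these requires two triangles sharing a vertex or an edge. For $K_{1,5}$, the only vertices of degree $5$ are variables $w$ of maximum degree, whose neighborhood already contains the edge $uu'$ between the two new vertices, so no induced $K_{1,5}$ can be centered at $w$. For $H$, the middle edge would need two endpoints each admitting two non-adjacent outside neighbors; a clause-triangle vertex has only its owner as outside neighbor, two variables are never adjacent, and the new vertices have degree $2$, so every candidate middle edge fails. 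Snail and press are eliminated by the same two facts: each clause-triangle vertex has a unique outside neighbor (ruling out the two pendants on a single vertex required by a snail, and leaving only one possible bridge from a clause-triangle paw to another triangle), and each new vertex has only two neighbors (so a vertex-triangle paw cannot provide the apex–pendant configuration needed by press).

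The core of the argument is the cycle analysis. The plan is to show that any induced cycle $C$ of length at least four in $S(G(F))$ avoids the new vertices (traversing $u$ would force both $w$ and $u'$ into $C$ and create the chord $wu'$), uses $0$ or $2$ vertices of every clause triangle (using exactly one vertex is impossible since that vertex has only one outside neighbor, and using all three creates triangle chords), and when it uses two such vertices it uses the edge between them. Hence $C$ traverses each visited clause triangle along a single edge flanked by two owner edges, and it lifts to a cycle of $G(F)$ alternating clauses and variables, with each $2k$-cycle in $G(F)$ producing an induced $3k$-cycle in $S(G(F))$. Since $G(F)$ is bipartite and $C_4$-free, all its cycles have length $2k$ with $k \geq 3$, so every induced cycle of length at least four in $S(G(F))$ has length in $\{9,12,15,\ldots\}$. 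This excludes $C_4$, $C_5$, $C_6$, $C_7$, $C_8$ and every $C_{3t+1}, C_{3t+2}$ with $t \geq 3$. I expect the main obstacle to be the lifting step: one must carefully justify that the induced-cycle condition in $S(G(F))$ corresponds exactly to the absence of chords in the associated clause-variable cycle of $G(F)$, and handle the $C_6$ boundary case (which is forbidden only because the $C_4$-freeness of $G(F)$ rules out the $k=2$ lift).
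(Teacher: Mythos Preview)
Your proposal is correct and follows essentially the same approach as the paper: both arguments hinge on the observation that every vertex lies in exactly one triangle (giving NSF, no pendants, and the absence of $K_4$, diamond, butterfly), a degree count for $K_{1,5}$, short case analyses for $H$, snail, and press, and the lifting of long induced cycles in $S(G(F))$ to even cycles of length at least $6$ in the bipartite $C_4$-free graph $G(F)$, forcing all such cycles to have length divisible by~$3$ and at least~$9$. The only cosmetic difference is that the paper phrases the snail/$H$/press eliminations via the fact that each vertex triangle has exactly two simplicial (degree-$2$) vertices, whereas you argue directly with degree constraints on clause-triangle vertices and on the new vertices; these are the same observation in different clothing, and your cycle-lifting discussion is in fact somewhat more explicit than the paper's about why new vertices are avoided and why each clause triangle contributes either zero or exactly one edge to an induced cycle.
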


\begin{proof}
Clearly, as $G(F)$ is connected then $S(G(F))$ is connected.
Every vertex of $S(G(F))$ is in exactly one triangle which implies that $S(G(F))$ is NSF graph without pendant vertices and it is ($K_4$, diamond, butterfly)-free. 

Every vertex $u$ of a clause triangle $T$ has exactly one neighbor $z\not\in T$ and $z$ is in a vertex triangle ($u$ has degree 3). Every vertex triangle has exactly two simplicial vertices and both of degree two. It is clear that the only vertices that can have degree more than 3 in $S(G(F))$ are those corresponding to circle vertices in $G(F)$. They have degree at most 3 in $G(F)$ but in $S(G(F))$, they have two additional neighbors. Hence, they have degree at most 5 in $S(G(F))$. As every vertex in $S(G(F))$ has at most degree 5, and it is in exactly one triangle, $S(G(F))$ is $K_{1,5}$-free (also, ($K_4$, diamond, butterfly,$K_{1,5}$)-free implies maximum degree 5). 

Next, let us show that $S(G(F))$ is ($H$,snail,press)-free. 

Suppose $S(G(F))$ has a snail \textcolor{black}{as an} induced subgraph. The triangle of the snail should be a vertex triangle since one of its vertices has degree at least 4. But in this case, this triangle has at most one simplicial vertex which is a contradiction. Hence, $S(G(F))$ is snail-free. 

If $S(G(F))$ has an $H$ as an induced subgraph. We consider two cases: (i) the middle edge is part of some triangle $T$ and clearly, $T$ is a vertex triangle, again $T$ has at most one simplicial vertex (ii) the middle edge connects two triangles, one of them is a clause triangle meaning that one of the middle \textcolor{black}{vertices} should have degree exactly 3. All its three \textcolor{black}{neighbors} are in $H$ and there is not any triangle, a contradiction. Therefore, $S(G(F))$ is $H$-free.

Suppose now that $S(G(F))$ contains a press \textcolor{black}{as an} induced subgraph. Clearly, one of the two triangles must be \textcolor{black}{a variable} triangle but both of them have \textcolor{black}{at most} one simplicial vertex. Again, a contradiction. So, $S(G(F))$ is press-free. 

%Next, \textcolor{black}{we analyze} the induced cycles that $S(G(F))$ can have. 
%It is clear that every induced cycle $C_k$ with $k\geq 4$ of $S(G(F))$ is also in $T(G(F))$. $C_k$ should have vertices of clause triangles and circle vertices of $T(G(F))$. The latter follows from the fact that all the neighbors of a circle vertex are vertices of different clause triangles, and 2 vertices of different clause triangles do not have any edge to connect them. Clearly, the intersection of $C_k$ and any clause triangle $T$ contains 0 or 2 vertices, since $C_k$ can not use all the three vertices of $T$. Furthermore, every vertex $v$ of $T$ has exactly one neighbor $w\not\in T$. $C_k$ should reach $T$ using a vertex and leave with another one. Moreover, these two vertices are consecutive in $C_k$. If we contract every pair of vertices of \textcolor{black}{a clause triangles in a} single vertex, then $C_k$ becomes an induced $C_{\frac{2}{3}k}$ which corresponds to an induced cycle of $G(F)$. 

Next, we analyze the possible induced cycles of $S(G(F))$ and $G(F)$.
%\begin{claim}

Let $C$ be an induced cycle of $S(G(F))$ having length greater than 3. We claim that the length of $C$ is reduced by 2/3, regarding its corresponding length in $G(F)$. 
%%\end{claim}
To show this claim, consider a cycle of $G(F)$ and its corresponding cycle of $S(G(F))$. Let $t$ and $k$ represent the lengths of these cycles of $G(F)$ and $S(G(F))$, respectively. Recall that $G(F)$ is planar, $C_4$-free and bipartite, whose parts are formed by circle vertices and rectangle vertices, respectively. 

In $G(F)$ each rectangle vertex has exactly degree 3 because each clause is formed by 3 positive literals. Furthermore, each of these rectangle vertices is replaced by a clause triangle in $T(G(F))$. To extend $T(G(F))$ into $S(G(F))$, we add two new adjacent vertices for each circle vertex $u$, both to be adjacent to $u$. Therefore, the vertices belonging to an induced cycle of length $\geq 4$ in $S(G(F))$ are circle vertices, and vertices of clause triangles. Clearly, they ought to be these both types, since the set of circle vertices forms an independent set, while the clause triangles are disjoint triangles. Clearly, we need at least two circle vertices, to form an induced cycle $C$ of $S(G(F))$. For each such circle vertex, we need two vertices of clause triangles. Therefore, if $C$ contains $t$ circle vertices, $C$ also contains another $2t$ vertices from clause triangles, that is $3t$ vertices in total. Next, every pair of vertices of clause triangles is contracted into a single vertex. Therefore, $t = (2/3)k$, and the length of each cycle of $G(F)$ becomes 2/3 smaller, indeed.

As $G(F)$ is bipartite and $C_4$-free, the only values available for $k$ is $3t$ with $t\geq 3$. Therefore, $S(G(F))$ and $T(G(F))$ are ($C_4$,$C_5$,$C_6$,$C_7$,$C_8$,$C_{3t+1}$,$C_{3t+2}$)-free, for $t\geq 3$. 

As $G(F)$ is planar, take a planar representation of $G(F)$, any vertex is replaced by a triangle. In case of a variable triangle, it is clear that the two adjacent new neighbors of the circle vertex maintain the planar representation (see Figure \ref{ReplacingFig4}). In case of a clause triangle, a similar fact holds (see Figure \ref{ReplacingFig1}). Therefore, $S(G(F))$ is planar.
\end{proof} 

%Depending the alternative of gadget used in the construction, $G$ is 
%\begin{description}
%\item[(a)] neighborhood star-free with some pendant vertices and ($K_4$, diamond, butterfly,$C_5$)-free 
%\item[(b)] neighborhood star-free and ($K_4$, diamond,$C_5$)-free
%\item[(c)] neighborhood star-free and ($K_4$, gem, $W_4$, butterfly,$C_5$)-free
%\end{description}

%%%Figure {ReplacingFig1}
\begin{figure}
\centering
\begin{tabular}{cc}

\begin{tikzpicture}[sq/.style={rectangle, scale=0.8, draw, fill=black!25, minimum size=0.8cm, inner sep=0pt},
c/.style={circle, scale=0.8, draw, minimum size=0.9cm, inner sep=0pt},
line width=1.25pt, scale= 0.8]
\node (0) [sq] at (0,0){$C_i$};
\node (1) [c] at (0,2.6){$X_1$};
\node (2) [c] at (2.6,-1.5){$X_2$};
\node (3) [c] at (-2.6,-1.5){$X_3$};
\draw[dashed, radius=1.8] (0,0) circle;
\draw[] (1) -- (0) -- (2);
\draw[] (0) -- (3);
\draw[] (3.8,0.6) -- (4.1,0.6) -- (4.1,0.8) -- (4.5,0.5) -- (4.1,0.2) -- (4.1,0.4) -- (3.8,0.4) -- cycle;
\end{tikzpicture}
&

\begin{tikzpicture}[sq/.style={rectangle, scale=0.8, draw, fill=black!25, minimum size=0.8cm, inner sep=0pt},
c/.style={circle, scale=0.8, draw, minimum size=0.9cm, inner sep=0pt},
line width=1.25pt, scale= 0.8]
\node (1) [c] at (0,2.6){$X_1$};
\node (2) [c] at (2.6,-1.5){$X_2$};
\node (3) [c] at (-2.6,-1.5){$X_3$};
\draw[dashed, radius=1.8] (0,0) circle;
\node (a) [circle, fill, draw] at (0,1){};
\node (b) [circle, fill, draw] at (0.87,-0.5){};
\node (c) [circle, fill, draw] at (-0.87,-0.5){};
\draw[] (1) -- (a) -- (b) -- (2);
\draw[] (b) -- (c) -- (3);
\draw[] (a) -- (c);
\end{tikzpicture}

\end{tabular}
\caption{Replacing a clause by a triangle}
\label{ReplacingFig1}
\end{figure}
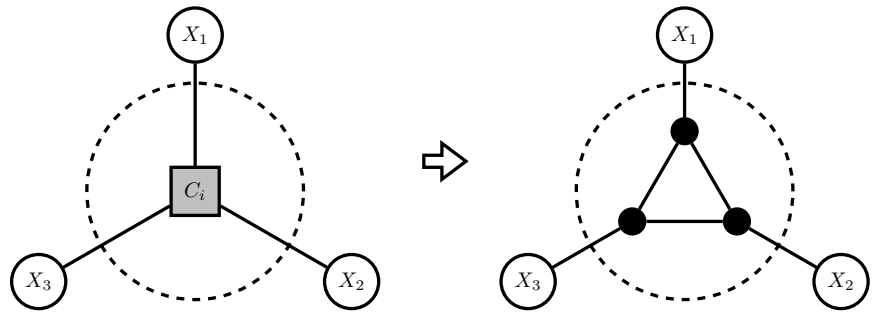
%%%
\begin{figure}
\centering
\begin{tabular}{cc}

\begin{tikzpicture}[sq/.style={rectangle, scale=0.8, draw, fill=black!25, minimum size=0.8cm, inner sep=0pt},
c/.style={circle, scale=0.8, draw, minimum size=0.9cm, inner sep=0pt},
line width=1.25pt, scale= 0.8]
\node (0) [c] at (0,0){$X_i$};
\node (1) [sq] at (0,2.6){$C_1$};
\node (2) [sq] at (2.6,-1.5){$C_2$};
\node (3) [sq] at (-2.6,-1.5){$C_3$};
\draw[dashed, radius=1.8] (0,0) circle;
\draw[] (1) -- (0) -- (2);
\draw[] (0) -- (3);
\draw[] (3.8,0.6) -- (4.1,0.6) -- (4.1,0.8) -- (4.5,0.5) -- (4.1,0.2) -- (4.1,0.4) -- (3.8,0.4) -- cycle;
\end{tikzpicture}
&

\begin{tikzpicture}[sq/.style={rectangle, scale=0.8, draw, fill=black!25, minimum size=0.8cm, inner sep=0pt},
c/.style={circle, scale=0.8, draw, minimum size=0.9cm, inner sep=0pt},
line width=1.25pt, scale= 0.8]
\node (1) [sq] at (0,2.6){$C_1$};
\node (2) [sq] at (2.6,-1.5){$C_2$};
\node (3) [sq] at (-2.6,-1.5){$C_3$};
\draw[dashed, radius=1.8] (0,0) circle;
\node (a) [circle, fill, draw] at (0,0){};
%\node (b) [circle, fill, draw] at (0.87,-0.5){};
\node (c) [circle, fill, draw] at (0.87/2,0.87){};
\node (b) [circle, fill, draw] at (0.87,0){};
%\node (c) [circle, fill, draw] at (-0.87,-0.5){};
\draw[] (1) -- (a) -- (b) -- (c) -- (a);
\draw[] (2) -- (a) -- (3);
\end{tikzpicture}

\end{tabular}
\caption{Replacing a variable by a triangle}
\label{ReplacingFig4}
\end{figure}
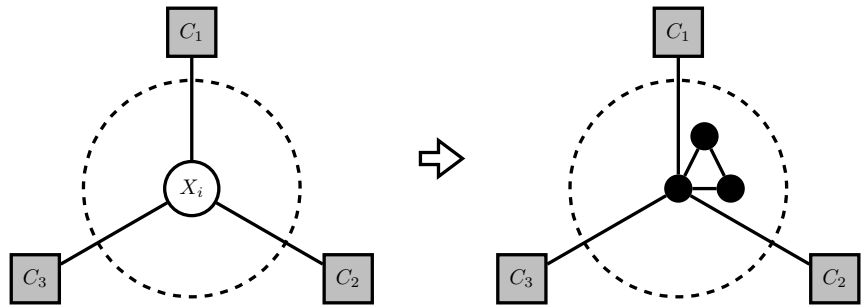
%%%

\begin{theorem}\label{1black}
There is a valid assignment for an input formula $F$ of Connected subcubic planar $C_4$-free positive 1in3SAT if and only if the resulting graph $S(G(F))$ of the main transformation contains some dominating induced matching.
\end {theorem}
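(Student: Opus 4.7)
Plan of proof. The structure of $S(G(F))$ that I would exploit is that every vertex lies in exactly one triangle (either a clause triangle replacing some $C_i$, or a vertex triangle attached to some circle vertex $w_i$), and the two newly added vertices of each vertex triangle are simplicial of degree~$2$. By Theorem~\ref{color-rules}, every triangle in any valid bi-coloring has exactly one white vertex, so the DIM is completely determined, triangle by triangle, by the choice of which vertex is white.

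For the ($\Leftarrow$) direction, I would start from a DIM of $S(G(F))$ together with its associated bi-coloring $(B,W)$, and define the assignment by declaring variable $x_i$ true iff the circle vertex $w_i$ is black. Consider any clause triangle $\{u_1,u_2,u_3\}$ where $u_j$ has owner $w_{i_j}$. Exactly one of $u_1,u_2,u_3$ is white, say $u_1$; then $u_2,u_3$ are black and form the DIM edge. Since $W$ is independent, the neighbor $w_{i_1}$ of the white $u_1$ must be black, so $x_{i_1}$ is declared true; and since each of the black $u_2,u_3$ already has one black neighbor (the other), its outside neighbor $w_{i_j}$ must be white, so $x_{i_2},x_{i_3}$ are declared false. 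Thus exactly one literal per clause is true, and this is consistent across different clauses automatically, because every occurrence of $x_i$ reads off the color of the single vertex $w_i$.

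For the ($\Rightarrow$) direction, starting from a valid $1$-in-$3$ assignment I would prescribe a total coloring: color $w_i$ black iff $x_i$ is true; in each vertex triangle $\{w_i,a_i,b_i\}$, if $x_i$ is true color one of $a_i,b_i$ white and the other black, otherwise color both $a_i,b_i$ black; and in each clause triangle $\{u_1,u_2,u_3\}$ color $u_j$ white for the unique $j$ with $x_{i_j}$ true and color the other two black. I would then verify the two conditions of Theorem~\ref{check-rulesOri}. Independence of $W$ is routine: the only adjacencies to check are $w_i$ to $u$-vertices, $w_i$ to $a_i,b_i$, and $u_j$ to its owner $w_{i_j}$, and in each case the choice of who is white is dictated by opposite truth values. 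For $1$-regularity of $G[B]$, each black $w_i$ (so $x_i$ true) has exactly one black neighbor inside its vertex triangle (all $u$-neighbors of $w_i$ are the ``chosen'' white vertex in their clause triangles); each black $u_j$ has as black neighbor precisely the other black vertex of its clause triangle (its owner $w_{i_j}$ is white because $x_{i_j}$ is false); and each black $a_i$ or $b_i$ has as black neighbor either $w_i$ or the other newly added vertex.

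The step I expect to require the most care is the verification of $1$-regularity at the circle vertices $w_i$ whose variable is true, because this is where the ``at most one true literal per clause'' condition of $1$-in-$3$SAT is used in an essential way: without it, $w_i$ could end up with several black $u$-neighbors. Everywhere else the argument is local to a single triangle, and the simplicial structure of $a_i,b_i$ together with Theorem~\ref{color-rules} forces the coloring inside each vertex triangle, so the only real content of the proof is the bijection, at each clause triangle, between ``the unique white $u_j$'' and ``the unique true literal of the clause''.
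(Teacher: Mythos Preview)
Your proposal is correct and follows essentially the same approach as the paper: define the assignment from the color of the circle vertex and, conversely, build the bi-coloring from the truth values, then verify Theorem~\ref{check-rulesOri}. The only cosmetic difference is in the $(\Leftarrow)$ direction: the paper argues that a circle vertex and its clause-triangle neighbor receive opposite colors by invoking the paw rule of Theorem~\ref{color-rules} (the variable triangle together with the adjacent $u$-vertex induces a paw with those two vertices at odd degree), whereas you deduce the same fact directly from the independence of $W$ and the $1$-regularity of $G[B]$; these are equivalent, since the paw rule is itself an immediate consequence of those two conditions.
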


\begin{proof}
Given a valid assignment, we define a bi-coloring of $S(G(F))$ using this assignment below.
\begin{enumerate}
\item if a variable $x_i$ is true then \textcolor{black}{color} black its corresponding circle vertex $v$. Choose arbitrarily any neighbor $w$ of its variable triangle and color it also black. Color white all other neighbors of $v$. 
\item if a variable $x_i$ is false then \textcolor{black}{color} white its corresponding circle vertex $v$ and \textcolor{black}{color} black all its neighbors.
\end{enumerate}

\textcolor{black}{Next we} prove that this bi-coloring verifies the conditions of Theorem \ref{check-rulesOri}.
\begin{enumerate}
\item In a variable triangle, every vertex satisfies trivially the conditions.
\item In a clause triangle, a vertex is colored black (white) if its unique circle vertex neighbor $x_i$ is colored white (black) because the variable $x_i$ is false (true). As every clause has one true literal (all literals are positive). In any clause triangle $T$, there is exactly one white vertex $v$ and all its 3 neighbors have been colored black. Particularly, the other two vertices of $T$. 
\end{enumerate}
Therefore, this bi-coloring corresponds to a DIM of $S(G(F))$.\\

Conversely, given a DIM $D$ and its associated bi-coloring, we define an assignment as follows. If a circle vertex $x_i$ is black (white) then the variable $x_i$ is true (false). We will prove this assignment is valid. Using the properties of Theorem \ref{color-rules} and considering $T$ the variable triangle that contains $x_i$, every clause triangle neighbor $w$ of $v$ \textcolor{black}{has a different} color of $x_i$ since the vertices of $T$ and $w$ induce a paw and $x_i,w$ are the odd degree vertices in the induced paw. The color of $w$ is white only if $x_i$ is true. As every \textcolor{black}{clause triangle} has exactly one white vertex. Consequently, every clause has exactly one true literal.
\end{proof}
\begin{corollary}\label{Cor1}
EXISTENCE DIM is NP-Complete for connected planar NSF ($K_4$,\quad diamond,butterfly,$K_{1,5}$,$H$,snail,press,$C_4$,$C_5$,$C_6$,$C_7$,$C_8$,$C_{3t+1}$,$C_{3t+2}$)-free graphs for $t\geq 3$ without pendant vertices.
\end {corollary}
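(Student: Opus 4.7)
The plan is to assemble Theorems \ref{subcubic1in3SAT}, \ref{SGF}, and \ref{1black} into the standard many-one reduction argument, together with a brief verification that EXISTENCE DIM lies in NP.

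First, I would dispatch membership in NP. Given a candidate edge set $E' \subseteq E$, let $B$ be the set of endpoints of $E'$ and $W = V \setminus B$. By Theorem \ref{check-rulesOri}, $E'$ is a DIM if and only if $G[B]$ is $1$-regular and $W$ is independent, both of which can be checked in time polynomial in $n+m$. Hence EXISTENCE DIM belongs to NP when restricted to any graph class.

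Next, I invoke the Main Transformation introduced just before Theorem \ref{SGF}. Given any positive CNF formula $F$ with $n$ variables and $m$ clauses whose associated graph $G(F)$ is planar, $C_4$-free, and of maximum degree at most $3$, this transformation produces a graph $S(G(F))$ with exactly $3n+3m$ vertices and $3n+6m$ edges, so it runs in time linear in $|F|$. Theorem \ref{SGF} shows that $S(G(F))$ is planar, neighborhood-star-free, ($K_4$, diamond, butterfly, $K_{1,5}$, $H$, snail, press, $C_4$, $C_5$, $C_6$, $C_7$, $C_8$, $C_{3t+1}$, $C_{3t+2}$)-free for $t\geq 3$, and without pendant vertices, i.e.\ $S(G(F))$ belongs to the target class. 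Theorem \ref{1black} shows that $F$ admits a valid one-in-three assignment if and only if $S(G(F))$ admits a DIM. Concatenating this with the NP-completeness of Subcubic Planar $C_4$-free positive 1in3SAT (Theorem \ref{subcubic1in3SAT}) yields a polynomial-time many-one reduction from a known NP-complete problem to EXISTENCE DIM restricted to the class in the statement, and therefore the corollary.

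Since the technical difficulties have all been absorbed into the three preceding theorems, there is no genuine obstacle left here; the main risk is a purely bookkeeping one, namely making sure both that the reduction is polynomial and that the output instance genuinely satisfies every forbidden-subgraph constraint in the statement. The explicit vertex and edge counts above handle the first point, and the case analysis in the proof of Theorem \ref{SGF} handles the second.
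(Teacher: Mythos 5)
Your proposal is correct and matches the paper's intent exactly: the corollary is stated without an explicit proof precisely because it is the immediate concatenation of Theorem~\ref{subcubic1in3SAT} (NP-hardness of the source problem), the polynomial-time Main Transformation with Theorem~\ref{SGF} (membership of $S(G(F))$ in the target class), and Theorem~\ref{1black} (correctness of the reduction). Your added remark on NP membership via Theorem~\ref{check-rulesOri} is a harmless and standard completion of the argument.
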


Since connected NSF graphs do not have proper perfect dominating sets, the existence of DIM is equivalent to ask if exist a perfect edge dominating set with at most $m-1$ edges \textcolor{black}{where $|E|=m$} (the trivial perfect dominating set is the set of all \textcolor{black}{edges $E$}).

\begin{corollary}\label{Cor2}
EXISTENCE of a perfect dominating set with at most $m-1$ edges is NP-Complete for Connected planar NSF ($K_4$,diamond,butterfly, $K_{1,5}$,$H$,snail,\quad press,$C_4$,$C_5$,$C_6$,$C_7$,$C_8$,$C_{3t+1}$,$C_{3t+2}$)-free graphs for $t\geq 3$ without pendant vertices where $m$ is the number of edges in the graph.
\end {corollary}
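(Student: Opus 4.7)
The plan is to derive Corollary \ref{Cor2} as an immediate consequence of Corollary \ref{Cor1}, leveraging the structural result from \cite{Li-Lo-Mo-Sz} (cited in the Introduction) that neighborhood-star-free graphs admit no proper PEDs. So first I would note that the class of graphs appearing in the statement, call it $\mathcal{C}$, is a subclass of neighborhood-star-free graphs. By the result of \cite{Li-Lo-Mo-Sz}, for every $G \in \mathcal{C}$ the only possible PEDs are (i) the trivial PED consisting of all $m$ edges, and (ii) EEDs, which coincide with DIMs of $G$.

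Next I would use this dichotomy to show the equivalence
\[
G \text{ has a PED of size } \leq m-1 \;\Longleftrightarrow\; G \text{ has a DIM}.
\]
The backward direction is immediate, since any DIM is a PED and, being an induced matching, has at most $n/2 \leq m-1$ edges (note $m \geq 1$ since otherwise the problem is trivial; in fact graphs in $\mathcal{C}$ have no pendant vertices and every vertex lies in a triangle, so $m \geq 3$ whenever $G$ is nonempty). The forward direction uses the dichotomy: a PED of size $\leq m-1$ cannot be the trivial one (which has size exactly $m$), so by \cite{Li-Lo-Mo-Sz} it must be an EED, i.e., a DIM.

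Then I would verify NP-membership: given an edge subset $E' \subseteq E$ with $|E'| \leq m-1$, one can check in polynomial time whether every edge of $E \setminus E'$ is dominated by exactly one edge of $E'$ and every edge of $E'$ is dominated only by itself among the edges of $E'$; this is the standard PED verification. Hence the problem is in NP.

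Finally, the NP-hardness follows by invoking Corollary \ref{Cor1}: the reduction is the identity on instances, since the input graph and the class are the same and the equivalence above translates a DIM instance to a PED-with-at-most-$m-1$-edges instance on the same graph. No step here is delicate; the main conceptual point is simply recognizing that within the class $\mathcal{C}$ the space of PEDs collapses to \emph{trivial} or \emph{DIM}, so the two decision problems are literally the same once one excludes the trivial PED by the size bound $m-1$.
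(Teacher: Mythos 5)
Your proposal is correct and matches the paper's own justification, which is the remark immediately preceding Corollary \ref{Cor2}: since NSF graphs have no proper PEDs, a PED with at most $m-1$ edges is necessarily an EED, so the problem coincides with EXISTENCE OF DIM on the class of Corollary \ref{Cor1}. Your additional detail that a DIM cannot have $m$ edges is the right thing to check (the cleanest route is that $E$ itself being an induced matching would force pendant vertices, excluded by the class), but the approach is essentially identical to the paper's.
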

A \textit{\textbf{subdivision}} of an edge $(v,w)$ in a graph consists of replacing such edge by two edges $(v,u),(u,w)$ and a new vertex $u$.
 
\begin{lemma}\label{3subdivision}\cite{CARDOSO2011521} Let $G$ be a graph and $e$ an edge in $G$. If $G'$ is the graph obtained from $G$ by subdividing the edge $e$ exactly three times,
\textcolor{black}{then $G$ has a dominating induced matching if and only if so has $G'$}.
\end{lemma}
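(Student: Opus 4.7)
Let $e=(v,w)$, and let $P$ denote the path $v-u_1-u_2-u_3-w$ that replaces $e$ in $G'$, where $u_1,u_2,u_3$ are the new vertices. Using the bi-coloring language of Theorem~\ref{check-rulesOri}, the plan is to give an explicit transformation in each direction and verify by inspection that the matching, induced, and ``dominated exactly once'' conditions are all preserved.

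For the forward direction, assume $G$ has a DIM $D$. Because $D$ is an EED, exactly one of the following three mutually exclusive cases occurs: (i) $e\in D$; (ii) $e\notin D$ and $e$ is dominated by some $(v,v')\in D$; (iii) $e\notin D$ and $e$ is dominated by some $(w,w')\in D$. In case (i) I would set $D'=(D\setminus\{e\})\cup\{(v,u_1),(u_3,w)\}$; in case (ii), $D'=D\cup\{(u_2,u_3)\}$; and in case (iii), $D'=D\cup\{(u_1,u_2)\}$. A straightforward check of the two rules of Theorem~\ref{check-rulesOri} shows that each $D'$ is a valid DIM of $G'$; the key observation in case (i) is that no neighbor of $v$ or $w$ other than $w,v$ respectively carries a $D$-edge (else $e$ would be dominated twice in $G$), so the newly inserted black vertices $u_1,u_3$ introduce no double domination on the original edges.

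For the reverse direction, assume $G'$ has a DIM $D'$ and examine $D'\cap E(P)$. Since each of the four path-edges must be dominated exactly once, and since two consecutive path-edges cannot both lie in $D'$ (they would share a vertex, violating the matching property) nor can two path-edges whose endpoints are joined by a third path-edge both lie in $D'$ (that would violate the induced condition on the length-four path), a short case analysis shows that $D'\cap E(P)$ is exactly one of $\{(v,u_1),(u_3,w)\}$, $\{(u_1,u_2)\}$, or $\{(u_2,u_3)\}$. The corresponding $D$ is, respectively, $(D'\setminus\{(v,u_1),(u_3,w)\})\cup\{e\}$, $D'\setminus\{(u_1,u_2)\}$, or $D'\setminus\{(u_2,u_3)\}$. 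I would then verify the two validity rules in $G$.

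The main obstacle, and really the only step that needs genuine care, is checking the induced condition in the reverse direction when $D'\cap E(P)=\{(u_1,u_2)\}$ (and symmetrically for $\{(u_2,u_3)\}$). Here I must rule out the possibility that $v$ has some other $D'$-edge $(v,v')$ outside $P$: if it did, then the $G'$-edge $(v,u_1)$ would join the two matched edges $(v,v')$ and $(u_1,u_2)$, contradicting the induced matching property. Hence $v$ is white in $D'$, and so reattaching $v$ to $w$ via the edge $e$ in $G$ does not create double domination, because $w$ has a $D'$-edge $(w,w')$ (forced, since $(u_3,w)$ must be dominated and $u_3$ carries no $D'$-edge) that uniquely dominates $e$. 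The remaining dominations of edges outside the path are inherited verbatim from $D'$, since all such edges lie in $E(G)\cap E(G')$ and are unaffected by the subdivision.
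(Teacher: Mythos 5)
Your argument is correct and complete: the three-way case analysis of how a DIM meets the subdivided path $v$--$u_1$--$u_2$--$u_3$--$w$, together with the observation that the endpoint colors are forced (e.g.\ $v$ white when $D'\cap E(P)=\{(u_1,u_2)\}$), is exactly what is needed in both directions. Note that the paper itself gives no proof of this lemma --- it is quoted from \cite{CARDOSO2011521} --- and your argument is essentially the standard one from that reference, so there is nothing to add.
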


Cardoso, et al. \cite {CARDOSO2011521} use the ideas of the triple subdivision of the edges (see Lemma \ref{3subdivision}) to extend the girth of the graph without affecting the existence or not of DIM.
This technique does not work for our class under consideration since this operation adds 3 new vertices of degree 2 and they are not part of triangles. Instead of this operation \textcolor{black}{we propose other} alternatives to avoid induced cycles of size at most $k$ for any $k$. But all of them come with some cost. The main idea is replacing edges of $S(G(F))$, the resulting graph of Main Transformation that connects vertices of different triangles by chains of structures that assure the vertices at both ends of chains should have different colors in any valid bi-coloring. We call those edges as variable-clause edges since they connect a variable triangle to a clause triangle. It is clear that every vertex of the chains should be in a triangle. We define two types of triangles: (i) vertex link triangles, these triangles have only one vertex in the chain, and (ii) edge link triangles, which are those that have exactly two vertices in the chain. There are many alternatives to create chains.
\begin{enumerate}
\item Every chain has exactly $2k_1$ vertex link triangles (see Figure \ref{ReplacingFig2} for $k_1=1$). It is clear that every pair of consecutive vertices in the chain are odd-degree vertices in some induced paw. By Theorem \ref{color-rules}, they have different colors in any valid bi-coloring. Since the chain has $2k_1+1$ edges, then both ends of the chain have different colors in any valid bi-coloring.

%%%Figure {ReplacingFig2}
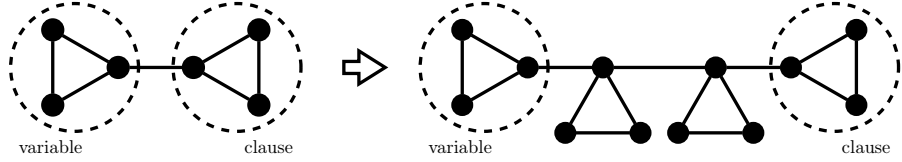
\begin{figure}
\centering
\begin{tabular}{cc}

\begin{tikzpicture}[v/.style={circle, scale=0.8, draw=black, fill},
line width=1.25pt]
\node (0) [v] at (0,0){};
\node (1) [v] at (-0.87,0.5){};
\node (2) [v] at (-0.87,-0.5){};
\draw[dashed, radius=0.85] (-0.58,0) circle;
\node (tx) [scale=0.7] at (-0.9,-1.05){variable};
\node (3) [v] at (1,0){};
\node (4) [v] at (1.87,0.5){};
\node (5) [v] at (1.87,-0.5){};
\draw[dashed, radius=0.85] (1.58,0) circle;
\node (tx) [scale=0.7] at (2,-1.05){clause};
\draw[] (0) -- (1) -- (2) -- (0) -- (3) -- (4) -- (5) -- (3);
\draw[] (3,0.08) -- (3.2,0.08) -- (3.2,0.22) -- (3.55,0) -- (3.2,-0.22) -- (3.2,-0.08) -- (3,-0.08) -- cycle;
\end{tikzpicture}
&

\begin{tikzpicture}[v/.style={circle, scale=0.75, draw=black, fill},
line width=1.25pt]
\node (0) [v] at (0,0){};
\node (1) [v] at (-0.87,0.5){};
\node (2) [v] at (-0.87,-0.5){};
\draw[dashed, radius=0.85] (-0.58,0) circle;
\node (tx) [scale=0.7] at (-0.9,-1.05){variable};
\node (3) [v] at (1+2.5,0){};
\node (4) [v] at (1.87+2.5,0.5){};
\node (5) [v] at (1.87+2.5,-0.5){};
\draw[dashed, radius=0.85] (1.58+2.5,0) circle;
\node (tx) [scale=0.7] at (2+2.5,-1.05){clause};
\node (6) [v] at (1,0){};
\node (7) [v] at (0.5,-0.87){};
\node (8) [v] at (1.5,-0.87){};
\node (9) [v] at (2.5,0){};
\node (10) [v] at (2,-0.87){};
\node (11) [v] at (3,-0.87){};
\draw[] (0) -- (1) -- (2) -- (0) -- (6) -- (9) -- (3) -- (4) -- (5) -- (3);
\draw[] (6) -- (7) -- (8) -- (6);
\draw[] (9) -- (10) -- (11) -- (9);
\end{tikzpicture}

\end{tabular}
\caption{Replacing a variable-clause edge by a chain of 2 vertex link triangles}
\label{ReplacingFig2}
\end{figure}
%%%

\item Every chain has alternately edge link triangles and vertex link triangles if we start from the variable triangle side, the first triangle is an edge link triangle. The number of each type of triangle is $2k_2$ (see Figure \ref{ReplacingFig3} for $k_2=1$). Some additional structure is needed for edge link triangles. Each edge link triangle has exactly one vertex that is not in the chain. We should connect this vertex to some contact vertex of a gadget. There are three options for the gadget structure (see Figure \ref{ReplacingFig3}).
\begin{enumerate}
\item A butterfly, where the contact vertex is its central vertex.
\item A $K_1$ and the contact vertex is the unique one. This vertex \textcolor{black}{will become} a pendant vertex.
\item A diamond, where the contact vertex is any vertex of degree 2 in the diamond.
\end{enumerate}

%%%Figure {ReplacingFig3}
\begin{figure}
\centering
\begin{tabular}{cc}

\begin{tikzpicture}[v/.style={circle, scale=0.75, draw=black, fill},
line width=1.25pt, scale=0.9]
\node (0) [v] at (0,0){};
\node (1) [v] at (-0.87,0.5){};
\node (2) [v] at (-0.87,-0.5){};
\draw[dashed, x radius=0.8, y radius=0.85] (-0.55,0) circle;
\node (tx) [scale=0.7] at (-0.7,-1.05){variable};
\node (3) [v] at (1-0.1,0){};
\node (4) [v] at (1.87-0.1,0.5){};
\node (5) [v] at (1.87-0.1,-0.5){};
\draw[dashed, x radius=0.8, y radius=0.85] (1.55-0.1,0) circle;
\node (tx) [scale=0.7] at (1.6,-1.05){clause};
\draw[] (0) -- (1) -- (2) -- (0) -- (3) -- (4) -- (5) -- (3);
\draw[] (2.8,0.08) -- (3,0.08) -- (3,0.22) -- (3.4,0) -- (3,-0.22) -- (3,-0.08) -- (2.8,-0.08) -- cycle;
\draw[color=white, x radius=0.7, y radius=0.4] (0,-1.9) circle;
\end{tikzpicture}
&

\begin{tikzpicture}[v/.style={circle, scale=0.75, draw=black, fill},
line width=1.25pt, scale=0.9]
\node (0) [v] at (0,0){};
\node (1) [v] at (-0.87,0.5){};
\node (2) [v] at (-0.87,-0.5){};
\draw[dashed, x radius=0.8, y radius=0.85] (-0.55,0) circle;
\node (tx) [scale=0.7] at (-0.7,-1.05){variable};
\node (3) [v] at (1+5.5,0){};
\node (4) [v] at (1.87+5.5,0.5){};
\node (5) [v] at (1.87+5.5,-0.5){};
\draw[dashed, x radius=0.8, y radius=0.85] (1.55+5.5,0) circle;
\node (tx) [scale=0.7] at (1.7+5.5,-1.05){clause};
\node (6) [v] at (1-0.1,0){};
\node (7) [v] at (2-0.1,0){};
\node (8) [v] at (1.5-0.1,-0.87){};
\node (9) [v] at (3-0.2,0){};
\node (10) [v] at (2.5-0.2,-0.87){};
\node (11) [v] at (3.5-0.2,-0.87){};
\node (12) [v] at (4-0.3,0){};
\node (13) [v] at (5-0.3,0){};
\node (14) [v] at (4.5-0.3,-0.87){};
\node (15) [v] at (6-0.4,0){};
\node (16) [v] at (5.5-0.4,-0.87){};
\node (17) [v] at (6.5-0.4,-0.87){};
\filldraw[fill=blue!25, x radius=0.7, y radius=0.4] (1.4,-1.9) circle;
\filldraw[fill=blue!25, x radius=0.7, y radius=0.4] (4.2,-1.9) circle;
\node (a) [circle, draw=black!50, fill=white] at (1.4,-1.87){};
\node (b) [circle, draw=black!50, fill=white] at (4.2,-1.87){};
\draw[] (0) -- (1) -- (2) -- (0) -- (6) -- (7) -- (9) -- (12) -- (13) -- (15) -- (3) -- (4) -- (5) -- (3);
\draw[] (6) -- (8) -- (7);
\draw[] (9) -- (10) -- (11) -- (9);
\draw[] (12) -- (14) -- (13);
\draw[] (15) -- (16) -- (17) -- (15);
\draw[] (a) -- (8);
\draw[] (b) -- (14);
\end{tikzpicture}
\end{tabular}
\vspace{0.25cm}

%%%%%Gadgets
\begin{tikzpicture}[v/.style={circle, scale=0.8, draw=black, fill},
line width=1.25pt]
% Buterfly
\node (0) [circle, draw=black!50] at (0,0.5){};
\node (1) [v] at (-0.87,0){};
\node (2) [v] at (-0.87,1){};
\node (3) [v] at (0.87,0){};
\node (4) [v] at (0.87,1){};
\draw[] (0) -- (1) -- (2) -- (0);
\draw[] (3) -- (0) -- (4) -- (3);
\node (a)[scale=0.9] at (0,1.6){(a) butterfly};
% Pendant
\node (0b) [circle, draw=black!50] at (3.25,0.5){};
\node (a)[scale=0.9] at (3.25,1.6){(b) pendant};
%%%Diamond
\node (0c) [circle, draw=black!50] at (6,0.5){};
\node (1c) [v] at (6.87,1){};
\node (2c) [v] at (6.87,0){};
\node (3c) [v] at (7.74,0.5){};
\node (c)[scale=0.9] at (0.5+6,1.6){(c) diamond};
\draw[] (0c) -- (1c) -- (2c) -- (0c);
\draw[] (1c) -- (3c) -- (2c);
\end{tikzpicture}

\caption{Replacing a variable-clause edge by a mixed chain of 2 vertex link triangles and 2 edge link triangles and 3 options for gadget}
\label{ReplacingFig3}
\end{figure}
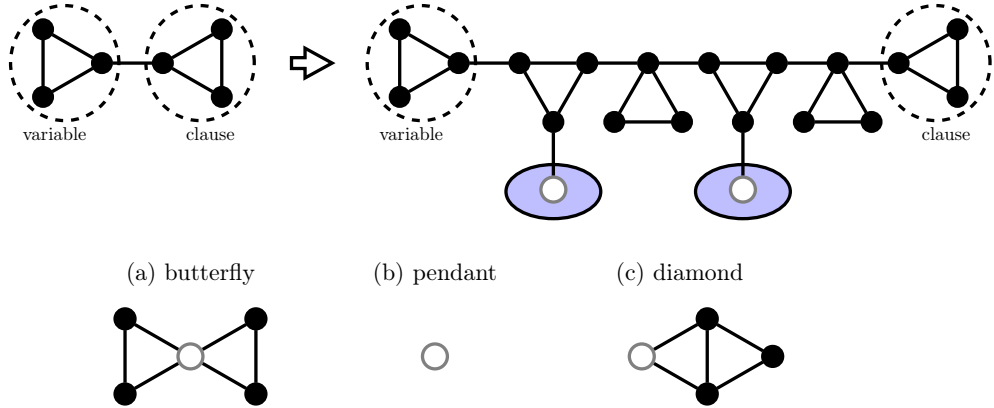
%%%

In any case, the contact vertex is colored white in any valid bi-coloring by Theorem \ref{color-rules} and its neighbor in the edge link triangle should be black, again by Theorem \ref{color-rules}. In consequence, the vertices of the edge link triangle in the chain should have different colors. All other edges of the chain form part of some induced paw and their incident vertices have degree odd in such paw. Therefore, every pair of consecutive vertices in the chain have different colors in any valid bi-coloring. The number of edges in the chain is exactly $6k_2+1$, \textcolor{black}{an odd} number, which implies that both ends of the chain have \textcolor{black}{a different} color in any valid bi-coloring. 
\end{enumerate}
It is clear that $S(G(F))$ admits a DIM if and only if the resulting graph $Q$, after the replacement of variable-clause edges by the chains described above, has a DIM. Furthermore, every variable-clause edge of $S(G(F))$ is part of some induced paw and its incident vertices have odd degree in such induced paw. \textcolor{black}{Consequently, they} should have different colors in any valid bi-coloring, by Theorem \ref{color-rules}. Then from any valid bi-coloring of $Q$, we can obtain trivially a valid bi-coloring of $S(G(F))$. Conversely, any valid bi-coloring of $S(G(F))$, can be extended easily to a bi-coloring of $Q$. Consequently, we can define new reductions from Subcubic planar $C_4$-free positive 1in3SAT to the DIM problem, concatenating the Main Transformation and one of these types of replacement of variable-clause edges. 

Next, let us examine the resulting graph class of each type of replacement. The number of variable-clause edges in an induced cycle in $S(G(F))$ is at least 6 and the number of clause triangle edges is at least 3 since the induced cycles in the associated graph $G(F)$ have at least 3 clause vertices.
\begin{enumerate}
\item In this case, the only forbidden subgraph of Corollaries \ref{Cor1} and \ref{Cor2} that can appear \textcolor{black}{is the $H$ graph}. All induced cycles of $Q$ should have at least \textcolor{black}{$6\cdot(2k_1+1)+3=12k_1+9$ vertices. The number of vertices of $Q$ is $n+(m-n)\cdot 3\cdot 2k_1=6k_1m-(6k_1-1)\cdot n$ and $Q$ has $m+(m-n)\cdot 4\cdot 2k_1=(8k_1+1)\cdot m-8k_1n$ edges.}
\item Independently of \textcolor{black}{the gadget} employed, all induced cycles of $Q$ should have at least \textcolor{black}{$6\cdot (3\cdot 2k_2+1)+3=36k_2+9$} vertices.
\begin{enumerate}
\item In this case, the only forbidden subgraph of Corollaries \ref{Cor1} and \ref{Cor2} that can appear is the butterfly. The number of vertices of $Q$ is \textcolor{black}{$n+(m-n)\cdot(8+3)\cdot 2k_2=22k_2m-(22k_2-1)\cdot n$ and $Q$ has $m+(m-n)\cdot(11+4)\cdot 2k_2=(30k_2+1)\cdot m-30k_2n$ edges.}
\item In the present \textcolor{black}{situation,} $Q$ has pendant vertices. The number of vertices of $Q$ is \textcolor{black}{$n+(m-n)\cdot(4+3)\cdot 2k_2=14k_2m-(14k_2-1)\cdot n$ and $Q$ has $m+(m-n)\cdot(5+4)\cdot 2k_2=(18k_2+1)\cdot m-18k_2n$ edges.}
\item The only forbidden subgraph of Corollaries \ref{Cor1} and \ref{Cor2} that can appear is the diamond. We remark that there are no gems, nor $W_4$'s. The number of vertices of $Q$ is \textcolor{black}{$n+(m-n)\cdot(7+3)\cdot 2k_2=20k_2m-(20k_2-1)\cdot n$ and $Q$ has $m+(m-n)\cdot(10+4)\cdot 2k_2=(28k_2+1)\cdot m-28k_2n$ edges.}
\end{enumerate}
\end{enumerate}
Also, in any case, $Q$ will still be a planar graph of degree at most five. We remark that cycles of size between $4$ and $k$, in $Q$, for fixed $k\geq 4$, are forbidden. For each type of replacement, we can determine the smallest required value of $k_1$ or $k_2$ to assure this.

\begin{theorem}
For any fixed value $k\geq 4$, EXISTENCE DIM and EXISTENCE of PED with at most $m-1$ edges are NP-Complete in the following graph classes.
\begin{itemize}
\item Connected planar NSF ($K_4$,diamond,butterfly,$K_{1,5}$,snail,press,$C_4,\dots,C_k$)-free without pendant \textcolor{black}{vertices.}
\item Connected planar NSF ($K_4$,diamond,$K_{1,5}$,$H$,snail,press, $C_4,\dots,C_k$)-free without pendant \textcolor{black}{vertices.}
\item Connected planar NSF ($K_4$,diamond,butterfly,$K_{1,5}$,$H$,snail,press,$C_4,\dots,C_k$)-\textcolor{black}{free.}
\item Connected planar NSF ($K_4$,gem,$W_4$,butterfly,$K_{1,5}$,$H$,snail,press,$C_4,\dots,C_k$)-free without pendant \textcolor{black}{vertices.}
\end{itemize}
\end {theorem}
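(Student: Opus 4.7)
The plan is to combine the Main Transformation with one of the three chain replacements introduced just before the theorem. Starting from an instance $F$ of Subcubic planar $C_4$-free positive 1in3SAT --- NP-complete by Theorem~\ref{subcubic1in3SAT} --- I would first produce $S(G(F))$ as in Theorem~\ref{SGF}, and then replace every variable-clause edge of $S(G(F))$ by a chain, obtaining a graph $Q$. For each of the four target classes, $Q$ will be planar NSF of maximum degree five, and the correctness of the chains (established just before the theorem, combined with Theorems~\ref{check-rulesOri}, \ref{color-rules} and \ref{1black}) guarantees that $Q$ has a DIM if and only if $F$ admits a valid assignment.

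For each of the four items I would select the replacement and set $k_1$ or $k_2$ as follows. In the first item, replace each variable-clause edge by a chain of $2k_1$ vertex link triangles (Scheme~1); no gadget is attached, so no pendant vertex, butterfly, or diamond is introduced, and the only forbidden subgraph from Corollary~\ref{Cor1} that could survive is $H$ --- precisely the graph not listed here. In the second item, use Scheme~2(a) with butterfly gadgets, so that butterflies appear (as allowed) but $H$ does not. In the third item, use Scheme~2(b) with $K_1$ gadgets, which introduces pendants but neither butterfly, diamond, nor $H$, matching the list with pendants allowed. Item four uses Scheme~2(c) with diamond gadgets; as observed in the discussion right before the theorem, these create diamonds but neither gems nor $W_4$'s.

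To forbid cycles $C_4,\dots,C_k$, I would use the counts already computed: every induced cycle of $S(G(F))$ of length greater than $3$ passes through at least three clause triangles and at least six variable-clause edges, so after replacement the shortest induced cycle of $Q$ has length at least $12k_1+9$ in Scheme~1 and at least $36k_2+9$ in Schemes~2(a)--(c). Choosing $k_1$ or $k_2$ sufficiently large as a function of $k$ thus eliminates every cycle of length between $4$ and $k$. The remaining forbidden subgraphs ($K_4$, $K_{1,5}$, snail, press, and the class-specific ones) are handled exactly as in the proof of Theorem~\ref{SGF}, with additional local checks at each junction between chain, gadget, and clause or variable triangle.

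The main obstacle I anticipate is precisely this last local verification: one must carefully inspect every junction to confirm that no unwanted induced subgraph sneaks in --- for example, that two consecutive vertex link triangles do not produce an $H$ whose middle edge lies inside the chain, or that a diamond gadget together with its edge link triangle does not form a gem or a $W_4$ by using a neighbor further along the chain. Since every vertex of $Q$ still belongs to exactly one triangle and has degree at most five, these checks reduce to a finite case analysis on balls of bounded radius around each gadget. Finally, the second statement of the theorem, concerning perfect dominating sets of size at most $m-1$, requires no extra work: as recalled immediately after Corollary~\ref{Cor1}, an NSF graph has no proper PED, hence any PED with fewer than $m$ edges must be a DIM, and the NP-hardness transfers verbatim from the first statement.
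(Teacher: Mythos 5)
Your proposal is correct and follows essentially the same route as the paper: the theorem is proved there by exactly the correspondence you set up (Scheme~1 for the first item, Schemes~2(a), 2(b), 2(c) for the second, third and fourth, respectively), with the same girth bounds $12k_1+9$ and $36k_2+9$ used to choose $k_1$ or $k_2$ as a function of $k$, and the same reduction of the PED statement to the DIM statement via the absence of proper PEDs in NSF graphs. Your remark that the local junction checks deserve explicit verification is a fair point; the paper asserts these claims without detailing the case analysis.
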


\section{Conclusions}

We conclude with an open question.
\bigskip

We wonder if there is some algorithmic relation between efficient and perfect edge domination. More specifically, we remark that there are graph classes which admit polynomial time solutions for solving the efficient edge domination problem while being hard for solving the perfect edge domination problem. However, we ask whether there is some graph class for which there exists a polynomial time algorithm for solving the perfect edge domination problem while being hard for the efficient edge domination problem.

\section*{Acknowledgements} 
%We appreciate the comments of an anonymous reviewer, which significantly helped us improving
%the presentation and clarity of this work.
Min Chih Lin was partially supported by CONICET grant PIP 11220200100084CO and UBACyT Grants 20020190100126BA and 20020170100495BA, Argentina.\\
Abilio Lucena acknowledges the support of CNPq grant 309887/2018-6, Brazil.\\
Nelson Maculan was partially supported by CNPq and COPPETEC Foundation, Brazil.\\
Veronica A. Moyano was partially supported by Universidad Nacional de General Sarmiento grant 30/1135, Argentina.\\
Jayme L. Szwarfiter was partially supported by FAPERJ and CNPq, Brazil.

%\begin{acknowledgements}
%If you'd like to thank anyone, place your comments here
%and remove the percent signs.
%\end{acknowledgements}

% BibTeX users please use one of
%\bibliographystyle{spbasic}      % basic style, author-year citations
%\bibliographystyle{spmpsci}      % mathematics and physical sciences
%\bibliographystyle{spphys}       % APS-like style for physics
%\bibliography{}   % name your BibTeX data base

% Non-BibTeX users please use
%\begin{thebibliography}{}
%
% and use \bibitem to create references. Consult the Instructions
% for authors for reference list style.
%
%\bibitem{RefJ}
% Format for Journal Reference
%Author, Article title, Journal, Volume, page numbers (year)
% Format for books
%\bibitem{RefB}
%Author, Book title, page numbers. Publisher, place (year)

\end{document}